\newtheorem{dummy}{Dummy}
\newtheorem{lemma}[dummy]{Lemma}
\theoremstyle{definition}
\newtheorem{definition}{Definition}
\newtheorem{example}[dummy]{Example}
\newtheorem{remark}[dummy]{Remark}
\newcommand{\ignore}[1]{}
\author{S. Pumpl\"un}
\email{susanne.pumpluen@nottingham.ac.uk}
\address{School of Mathematical Sciences\\
University of Nottingham\\
University Park\\
Nottingham NG7 2RD\\
United Kingdom
}
\keywords{Space-time block code, linear $(f,\sigma,\delta)$-code, nonassociative algebra, coset coding, wiretap coding,
 Construction A, order, skew polynomial ring}
\subjclass[2010]{Primary: 17A35; Secondary: 11T71, 94B40, 94B05}
\begin{document}

\title[How to obtain lattices from $(f,\sigma,\delta)$-codes ]
{How to obtain lattices from $(f,\sigma,\delta)$-codes via a generalization of Construction A}

\begin{abstract}
We show how cyclic $(f,\sigma,\delta)$-codes over finite rings canonically induce a $\mathbb{Z}$-lattice in
$\mathbb{R}^N$ by using certain quotients of orders in nonassociative division algebras defined using the skew polynomial
$f$.  This construction generalizes the one using certain
 $\sigma$-constacyclic codes by Ducoat and Oggier, which used quotients of orders in non-commutative
 associative division algebras defined by $f$, and can be viewed as a generalization of the classical Construction A for lattices from linear codes.
It has the potential to be applied to coset coding, in particular to wire-tap coding.
Previous results by Ducoat and Oggier are obtained as special cases.
\end{abstract}

\maketitle

%
%

\section*{Introduction}

In the classical Construction A, a lattice is obtained by lifting a linear
code over some finite ring \cite{CS}. This idea was recently generalized to the non-commutative setting
by considering natural orders in  cyclic algebras over number fields: by taking the quotient of
the natural order by a suitable ideal, a ring is obtained which is isomorphic
to the quotient of a twisted polynomial ring by some polynomial  \cite{DO, OS}. This
established a connection between twisted polynomials and certain $\sigma$-constacyclic codes.

We generalize Construction A using skew polynomial rings $S[t;\sigma,\delta]$ and  construct lattices
by lifting cyclic $(f,\sigma,\delta)$-codes, i.e. much more general linear codes than considered in \cite{DO, OS}, to lattices in nonassociative algebras.
The multiplicative structure of the algebra is not necessary to build a lattice, so we do not limit our considerations
to associative algebras as has been done so far.

 As
recently several classes of cyclic $(f,\sigma,\delta)$-codes were constructed with a better minimal
distance for certain lengths than previously known codes (e.g., see  \cite{B},
\cite{BG}, \cite{BSU08},  \cite{BU14}, \cite{BU14.2}, \cite{BU09}, \cite{BU09.2} \cite{Cao},  \cite{FG},
  \cite{GK}, \cite{LL},  \cite{MW}),
 $(f,\sigma,\delta)$-codes become increasingly important. These codes  employ skew polynomial rings
$S[t;\sigma,\delta]$  where $S$ is a unital ring,
$\sigma$ an injective endomorphism of $S$ and $\delta$ a left $\sigma$-derivation of $S$,
and are  built by
choosing a  monic polynomial $f\in S[t;\sigma,\delta]$ of degree $m$, and some monic right divisor $g$ of $f$
\cite{BL13}.
Every cyclic $(f,\sigma,\delta)$-code is associated with a  principal left ideal
of a unital nonassociative algebra $S_f$ defined by $f$, which is generated by some monic right divisor $g$ of $f$.

The nonassociative algebra
$S_f=S[t;\sigma,\delta]/S[t;\sigma,\delta]f$ is defined on the additive subgroup
$\{h\in S[t;\sigma,\delta]\,|\, {\rm deg}(h)<m \}$ of $S[t;\sigma,\delta]$ by using right division by $f$
to define the algebra multiplication $g\circ h=gh \,\,{\rm mod}_r f $  \cite{P15}.
This can be seen as a canonical generalization of associative quotient algebras
$S[t;\sigma,\delta]/(f)$, where we factor out a two-sided ideal generated by $f$, which occurs when $Rf$ is a two-sided
ideal.
 If $S$ is a division algebra, the associative
quotient algebras $S[t;\sigma,\delta]/(f)$ as well as the right nuclei of the
nonassociative algebras
$S_f$ were used when constructing central simple algebras for instance in  \cite{Am}, \cite{Am2},  \cite{Hoe},
\cite[Sections 1.5, 1.8, 1.9]{J96},  \cite{O}. Due to their large nuclei, the algebras $S_f$ were also  successfully employed to systematically build
fast-decodable fully diverse space-time block codes in \cite{SP14, R13, MO13}, see \cite{PS15},
which are used for reliable high rate transmission
over wireless digital channels with multiple antennas transmitting and receiving the data.
Skew-polynomial rings and their ideals have been already used in other applications and when generalizing other classical notions like
Gr\"{o}bner bases \cite{M0} to a non-commutative setting, e.g. see
\cite{BGV, CM, Lev, Lev2, KrW, Kr, P1, P2, Wei}, where they appear as examples of solvable polynomial rings, operator theory
\cite{G6}, and other codes, in particular (cyclic) convolutional codes and MDS codes cf. \cite{G1, G2, G4, G5, N1, N2, N3}.

We choose suitable monic irreducible skew polynomials
$f\in K[t,\sigma,\delta]$ with
$K/F$ a finite field extension of number fields, or $f\in D[t,\sigma,\delta]$ with $D$ a cyclic division algebra over
 a number field, and define natural
orders $\Lambda$ in $S_f$. We then use the quotient of $\Lambda$ by certain
two-sided ideals  to canonically construct a
lattice $L$ in $\mathbb{R}^N$, i.e. a $\mathbb{Z}$-module $L$ of rank $N$,
 from a cyclic $(f,\sigma,\delta)$-code over a finite ring.

The non-commutative setup  treated in \cite{DO, OS} is obtained as the special case where
$K/F$ is a cyclic field extension of degree $n$ and $f(t)=t^n-c\in \mathcal{O}_F[t;\sigma]$ is
\emph{(right-)invariant}, i.e. satisfy
 $fR\subset Rf$, which makes $Rf$ a two-sided ideal, and $S_f$ non-commutative, but still associative.

The advantage of using  nonassociative algebras as we do is the fact that this does not limit our choices of skew
polynomials $f$ to those which create two-sided ideals $Rf$. This means that we have a much larger choice of lattices
we can build. Lattices now can be obtained
by lifting  any cyclic $(f,\sigma,\delta)$-code, moreover, we can also lift
$\sigma$-constacyclic codes to lattices (now sitting inside nonassociative algebras).
Sometimes there exist easy conditions for nonassociative cyclic algebras to be division algebras which is an additional bonus.

Our construction A  can be used  to encode
space-time block codes, for coset coding, and in particular for wiretap coding.

The paper is organized as follows:
After collecting  the results we need in Section 1,  for monic and irreducible
$f\in K[t;\sigma,\delta]$ we define a natural order in $S_f$, and investigate the quotients of a natural order
 by some ideals in Section \ref{sec:naturalI}. These results are then generalized  in Section \ref{sec:naturalII} to
 monic irreducible  $f\in D[t;\sigma,\delta]$, where $D=(K/F,\rho,c)$ is a cyclic division algebra.
In Sections \ref{sec:codes} and \ref{sec:codesII}, we describe a
  lattice encoding of certain cyclic $(f,\sigma,\delta)$-codes over the finite rings $\mathcal{O}_K/\mathfrak{p}\mathcal{O}_K$,
 where $\mathfrak{p}$ is a  maximal ideal in some suitable subring of $\mathcal{O}_K$, and how it can be applied to space-time block codes.

Throughout the paper we will put  a special emphasis on the
 nonassociative cyclic algebras $(K/F,\sigma,c)$ employed in \cite{SPO12}, and on the generalized nonassociative cyclic algebras
 $(D, \sigma, d)$, since these are used for iterated space-time block codes \cite{SP14}, \cite{PS15}.

%
%

\section{Preliminaries}

\subsection{Nonassociative algebras}
Let $R$ be a unital commutative ring and let $A$ be an
$R$-module.
We call $A$ an \emph{algebra} over $R$ if there exists an
$R$-bilinear map $A\times A\to A$, $(x,y) \mapsto x \cdot y$, denoted simply by juxtaposition $xy$,
the  \emph{multiplication} of $A$.
An algebra $A$ is called \emph{unital} if there is
an element in $A$, denoted by 1, such that $1x=x1=x$ for all $x\in A$.
We will only consider unital algebras.

For an $R$-algebra $A$,
the {\it left nucleus} of $A$ is defined as ${\rm Nuc}_l(A) = \{ x \in A \, \vert \, [x, A, A]  = 0 \}$ where
 $[x, y, z] = (xy) z - x (yz)$ for $x,y,z\in A$, the
{\it middle nucleus}  as ${\rm Nuc}_m(A) = \{ x \in A \, \vert \, [A, x, A]  = 0 \}$ and  the
{\it right nucleus}  as ${\rm Nuc}_r(A) = \{ x \in A \, \vert \, [A,A, x]  = 0 \}$.
Their intersection
 ${\rm Nuc}(A) = \{ x \in A \, \vert \, [x, A, A] = [A, x, A] = [A,A, x] = 0 \}$ is the {\it nucleus} of $A$.
The {\it center} of $A$ is ${\rm C}(A)=\{x\in A\,|\, x\in \text{Nuc}(A) \text{ and }xy=yx \text{ for all }y\in A\}$
 \cite{Sch}.

Let $R$ be a Noetherian integral domain with quotient field $F$ and $A$ a finite-dimensional unital $F$-algebra.
Then an $R$-\emph{lattice} in $A$ is an $R$-submodule $\Gamma$ of $A$ which is
 finitely generated and contains an $F$-basis of $A$.
An $R$-\emph{order} in $A$ is a multiplicatively closed $R$-lattice containing $1_A$
(the multiplication may be not associative).
An $R$-order will be called \emph{maximal} if $\Gamma'\subset\Gamma$ implies $\Gamma'=\Gamma$ for every $R$-order
$\Gamma'$ in $A$.

   An algebra $A\not=0$ over a field $F$ is called a {\it division algebra}, if for any $a\in A$, $a\not=0$,
the right multiplication  with $a$, $L_a(x)=ax$,  and the right multiplication with $a$, $R_a(x)=xa$, are bijective.
Any division algebra is simple, that means has only trivial two-sided ideals.
A finite-dimensional algebra
$A$ is a division algebra over $F$ if and only if $A$ has no zero divisors.


\subsection{Skew polynomial rings}


Let $S$ be a unital  (not necessarily commutative) ring, $\sigma$ an injective ring homomorphism of $S$ and
$\delta:S\rightarrow S$ a \emph{left $\sigma$-derivation}, i.e.
an additive map such that
$\delta(ab)=\sigma(a)\delta(b)+\delta(a)b$
for all $a,b\in S$, implying $\delta(1)=0$. Let ${\rm Const}(\delta)=\{a\in S\,|\, \delta(a)=0\}$ and
${\rm Fix}(\sigma)=\{a\in S\,|\, \sigma(a)=a\}$.

 The \emph{skew polynomial ring} $R=S[t;\sigma,\delta]$(defined first by Ore \cite{O1})
is the set of skew polynomials $a_0+a_1t$ $+\dots +a_nt^n$
with $a_i\in S$, where addition is defined term-wise and multiplication by
$ta=\sigma(a)t+\delta(a)$ for all $a\in S$
 (for properties see \cite{C, G3, G6}).
The ring $S[t;\sigma]=S[t;\sigma,0]$ is called a \emph{twisted polynomial ring} and $S[t;\delta]=S[t;id,\delta]$ a
 \emph{differential polynomial ring}.

 For $f=a_0+a_1t+\dots +a_nt^n$ with $a_n\not=0$ define ${\rm deg}(f)=n$ and ${\rm deg}(0)=-\infty$.
Then ${\rm deg}(fg)\leq{\rm deg} (f)+{\rm deg}(g)$
with equality if
\begin{itemize}
\item $f$ has an invertible leading coefficient,
\item $g$ has an invertible leading coefficient,
\item $S$ is a domain.
\end{itemize}
 An element $f\in R$ is \emph{irreducible} in $R$ if it is not a unit and it has no proper factors, i.e if there do not exist $g,h\in R$ with
 ${\rm deg}(g),{\rm deg} (h)<{\rm deg}(f)$ such that $f=gh$.


\subsection{How to obtain nonassociative algebras from skew polynomial rings}\label{sec:S_f}


From now on, let $R=S[t;\sigma,\delta]$ and $\sigma$ injective. We do not assume $S$ to be a division ring.
We can still perform a right division by a polynomial $f \in R$ which has invertible leading coefficient $d_m$:
for all $g(t)\in R$ of degree $l> m$,  there exist  uniquely determined $r(t),q(t)\in R$ with
 ${\rm deg}(r)<{\rm deg}(f)$, such that
$g(t)=q(t)f(t)+r(t).$
Let ${\rm mod}_r f$ denote the remainder of right division by such an $f$ \cite[Proposition 1]{P15}.

 Suppose $f(t)=\sum_{i=0}^{m}d_it^i\in R=S[t;\sigma,\delta]$ has an invertible leading coefficient $d_m$.
  Let  $R_m=\{g\in R\,|\, {\rm deg}(g)<m\}.$
Then $R_m$  together with the multiplication
$g\circ h=  gh \,\,{\rm mod}_r f$
becomes a  unital nonassociative ring $S_f=(R_m,\circ)$ also denoted by $R/Rf$  \cite{P15}.

This construction was  introduced by Petit \cite{P66,P68} for unital division rings $S$.
$S_f$ is a unital nonassociative algebra  over
 $S_0=\{a\in S\,|\, ah=ha \text{ for all } h\in S_f\}$ which is a commutative subring of $S$.
We call $S_f$ a \emph{Petit algebra}.
The algebra $S_f$ is associative if and only if  $Rf$ is a two-sided ideal in $R$
  (\cite[Theorem 4 (ii)]{P15}, or \cite[(1)]{P66} if $S$ is a division ring).
 For all invertible $a\in S$ we have $S_f\cong S_{af}$, so that without loss of generality it suffices to only
consider monic polynomials in the construction.

  If $S_f$ is not associative then $S\subset {\rm Nuc}_l(S_f)$ and $S\subset{\rm Nuc}_m(S_f)$,
${\rm Nuc}_r(S_f)=\{g\in R_m\,|\, fg\in Rf\}$ and $S_0$ is the center of $S_f$ \cite{P15}.
It is easy to see that
$C(S)\cap {\rm Fix}(\sigma)\cap {\rm Const}(\delta)\subset S_0.$

 If $S$ is a division algebra and $S_f$ is a finite-dimensional vector space over $S_0$,
  then $S_f$ is a division algebra if and only if $f(t)$ is irreducible in $R$ \cite[(9)]{P66}.

For $f(t)=\sum_{i=0}^{m}d_it^i\in S[t;\sigma]$, $t$ is left-invertible in $S_f$ if and only if $d_0$ is invertible
 by a simple degree argument. Thus if $f$ is irreducible (hence $d_0\not=0$) and $S$ a division ring then
$t$ is always left-invertible in $S_f$ and $S_0={\rm Fix}(\sigma)\cap C(S)$ is the center of $S_f$
\cite[Theorem 8 (ii)]{P15}.

The  $S$-basis $1, t, t^2, \ldots, t^{n-1}$ is  the \emph{canonical basis} for the left $S$-module $S_f$.
Since $ S\subset {\rm Nuc}_m(S_f)$ and $S\subset {\rm Nuc}_l(S_f)$, the right multiplication with $0\not=a\in S_f$ in $S_f$,
$R_h:S_f\longrightarrow S_f,$ $p\mapsto pa$, is an $S$-module endomorphism, and
after expressing  $R_a$ in matrix form
with respect to the canonical basis of $S_f$, the map
$$\gamma: S_f \to {\rm End}_K(S_f), a\mapsto R_a$$
induces an injective $S$-linear map
$$\gamma: S_f \to {\rm Mat}_m(S), a\mapsto R_a \mapsto M(a).$$
This fact is exploited when designing  space-time block codes which employ one of the following two
special cases of algebras:

\begin{definition}
(i) Let $S/S_0$ be an extension of commutative unital rings and $G=\langle \sigma\rangle$ a finite cyclic group of order
$m$ acting on $S$ such that $S_0={\rm Fix}(\sigma)$. For any $c\in S$,
$$S_f=S[t;\sigma]/S[t;\sigma] (t^m-c)$$
is called a \emph{nonassociative cyclic algebra}  $(S/S_0,\sigma,c)$ \emph{of degree $m$}.
\\ (ii)  Let $D$ be a finite-dimensional central division algebra
 over  $F={\rm C}(D)$ of degree $n$, $\sigma\in {\rm Aut}(D)$ such that
$\sigma|_{F}$ has finite order $m$ and $f(t)=t^m-d\in D[t;\sigma]$. Let $F_0=F\cap {\rm Fix}(\sigma)$. The $F_0$-algebra
$S_f=D[t;\sigma]/D[t;\sigma]f(t)$
 is called a \emph{(generalized) nonassociative cyclic algebra of degree $m$}.  We denote this algebra by $(D,\sigma, d)$ and call
$1,e,\dots,e^n,t,et\dots,e^{n-1}t,\dots e^nt^{m-1}$  its \emph{canonical basis} as a left $K$-vector space.
\end{definition}

\begin{remark} \label{re:nonassquats}
 If $c \in S \setminus S_0$, then $(S/S_0,\sigma,c)$ has nucleus
$S$  and center $S_0$. These algebras  first appeared over finite fields
 in \cite{S}, over general fields they were studied in \cite{S12}, and
 over number fields, in  \cite{SPO12}.
 If $c\in S_0^\times$, $S[t;\sigma]/S[t;\sigma] (t^m-c)$ is a classical associative cyclic algebra,
 cf. \cite{DO}, \cite{OS}.  If $c=0$, $S[t;\sigma]/S[t;\sigma] (t^m)$ is a commutative associative algebra,
 the direct product of $m$ copies of $S$. If $S/S_0$ is a cyclic
 Galois field extension of degree $m$ with Galois group $\langle \sigma\rangle$ and $c\in S\setminus S_0$,
 then ${\rm Nuc}((S/S_0,\sigma,c))=K$. If  $m$ is prime then $(S/S_0,\sigma,c)$ is  a division algebra. For non-prime $m$, a division
 algebra for all choices of $c$ such that $1,c,\dots,c^{m-1}$ are linearly independent \cite{S12}.
\end{remark}

\begin{example}\label{ex:gencyclic}
Let $F$ and $L$ be fields, $F_0=F\cap L$, and let $K$ be a cyclic field extension of both $F$ and $L$ such that
${\rm Gal}(K/F) = \langle \rho\rangle$ and $[K:F] = n$, ${\rm Gal}(K/L) = \langle \sigma \rangle$ and $[K:L] = m$,
such that $\rho$ and $\sigma$ commute.
 Let $D=(K/F, \rho, c)$  be an associative cyclic division algebra over $F$ of degree $n$ with canonical basis
 $1,e,\dots,e^{n-1}$ (where $e^n=c,$ $el=\rho(l)e$ for every $l$ in $K$), and  $c\in F_0$.
 For $x= x_0 + x_1 e+x_2  e^2+\dots + x_{n-1}e^{n-1}\in D$, extend $\sigma$ to an automorphism $\sigma\in {\rm Aut}_L(D)$
 of order $m$  via
$$\sigma(x)=\sigma(x_0) +  \sigma(x_1)e +\sigma(x_2)  e^2 +\dots +\sigma(x_{n-1}) e^{n-1}.$$
 For all $d \in D^\times$,
$S_f=D[t;\sigma]/D[t;\sigma](t^m-d)$
is the generalized nonassociative cyclic algebra $(D,\sigma,d)$ of dimension $m^2n^2$ over $F_0$.
For all $d \in F^\times$, we have
$$S_f=D[t;\sigma]/D[t;\sigma](t^m-d)= (L/F_0,\rho, c)\otimes_{F_0} (F/F_0, \sigma, d)=(D, \sigma, d)$$
$(D,\sigma,d)$  is associative if and only if $d\in F_0$.
For $f\in F_0[t]$, $(D,\sigma,d)$ is a \emph{generalized cyclic algebra}  of degree $n$ \cite[Section 1.4]{J96}.
\end{example}

\subsection{Space-time block coding}\label{subsec:STBC}

An ($s\times t$) \emph{space-time block code} (STBC) is a set $\mathcal{C}$ of complex $s\times t$ matrices.
$\mathcal{C}$ is called \emph{linear} if $X,X'\in \mathcal{C}$ implies $X\pm X'\in \mathcal{C}$.
 A linear code is called {\it fully diverse}, if ${\rm det}X\not=0$ for all $0\not=X\in \mathcal{C}$.

Let $K/F$ be a Galois field extension of degree $n$ and  $K$
an imaginary number field.
Nonassociative cyclic division algebras $A=(K/F,\sigma, c)$ of degree $n$
can be  used to build linear $n\times n$ STBCs with entries in $K$, since
 the right multiplication in $A$ induces the
 injective $K$-linear map $\gamma: A \hookrightarrow {\rm End}_K(A)\hookrightarrow {\rm Mat}_(K)$,
 $ a \mapsto R_a\mapsto M(a)$ (cf. Section \ref{sec:S_f}).
  The set of matrices $\gamma(A)$ is a  linear STBC that is fully diverse since $A$ is a division algebra.

Let  $A=(D,\sigma,d)$ be a generalized nonassociative cyclic division algebra, with
$D=(K/F,\sigma, c)$ an associative cyclic algebra of degree $n$. Again, $A$ can be  used to build a
fully diverse  linear $mn\times mn$ STBC
with entries in $K$: we know $\gamma: A \hookrightarrow {\rm End}_D(A), a\mapsto R_a$ is an injective $D$-linear
map, and $K\subset D$. Using the canonical $K$-basis of $A$, we  obtain an
$mn\times mn$-matrix $M(a)$ representing $R_a$ for every $a\in A$.
Thus we have $\gamma: A \hookrightarrow {\rm End}_D(A)\hookrightarrow {\rm Mat}_{mn}(K),$ $ a \mapsto R_a\mapsto M(a)$
and $\gamma(A)$ is a fully diverse  linear STBC.

When $d\in L^\times$ or $d\in F^\times$,
$\gamma(A)$ is used for the codes in \cite{P13.2}, \cite{PS15}, \cite{R13}.
For $m=2$, $\gamma(A)$ is used in the iterated codes constructed in \cite{MO13}.
 In particular, for $d\in F^\times$ the algebra in Example \ref{ex:gencyclic} is employed for the
 space-time block codes in \cite{R13}, see also \cite{SP14}.

\subsection{Cyclic $(f,\sigma,\delta)$-codes}

 Let $f\in S[t;\sigma,\delta]$ be monic of degree $m$ and $\sigma$ injective. We associate to an element
$a(t)=\sum_{i=0}^{m-1}a_it^i$ in $S_f$ the  vector $(a_0,\dots,a_{m-1})$. A
\emph{linear code of length $m$ over $S$} is a submodule of the $S$-module $S^m$.
 Conversely, for any linear code $\mathcal{C}$ of length $m$ we denote by $\mathcal{C}(t)$ the set of skew polynomials
 $a(t)=\sum_{i=0}^{m-1}a_it^i\in S_f$ associated to the codewords $(a_0,\dots,a_{m-1})\in \mathcal{C}$.

A \emph{cyclic $(f,\sigma,\delta)$-code} $\mathcal{C}\subset S^m$ is a set consisting of the vectors
$(a_0,\dots,a_{m-1})$ obtained from elements $h=\sum_{i=0}^{m-1}a_it^i$
in a left principal ideal $S_f g$ where $S_f=S[t;\sigma,\delta]g/S[t;\sigma,\delta]f$, and $g$ is a monic
right divisor of $f$.
A code $\mathcal{C}$ over $S$ is called
   $\sigma$-\emph{constacyclic} if there is a non-zero $c\in S$ such that
   $$(a_0,\dots,a_{m-1})\in  \mathcal{C}\Rightarrow (\sigma(a_{m-1})c,\sigma(a_0),\dots,\sigma(a_{m-2}))\in  \mathcal{C}.$$

\begin{lemma} (cf. \cite[Proposition 7]{P15})  \label{prop:skewcodemain}
Let $f\in R=S[t;\sigma,\delta]$ be monic of degree $m$.
 \\ (a) Let $\sigma$ be injective. Then:
 \begin{itemize}
 \item  Every  right divisor $g$ of $f$ of degree $<m$ with an invertible leading coefficient
generates a principal left ideal in $S_f$.
 \item All left ideals in $S_f$ which contain a non-zero  polynomial $g$
of minimal degree with invertible leading coefficient are principal left ideals, and $g$ is a right divisor of $f$ in $R$.
\item (\cite[Theorem  1]{BL13}) Each principal left ideal generated by a monic right divisor of $f$ is
an $S$-module which is isomorphic to a submodule of $S^m$ and forms a code of length $m$ and dimension $m-{\rm deg}(g)$.
 \end{itemize}
 (b) Let $S$ be a division ring.
Then all left ideals in $S_f$ are generated by some monic right divisor $g$ of $f$ in $R$.
\end{lemma}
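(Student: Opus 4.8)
The plan is to work throughout inside $R_m=S_f$, keeping in mind that the product is $a\circ h = ah\,\mathrm{mod}_r f$ and that right division by the monic $f$ is available for every polynomial. I would treat (a)(i) in three steps, deduce (a)(ii) as bookkeeping, and obtain (b) as the special case where $S$ is a division ring. First I would settle that a right divisor $g$ of $f$ with $\deg g=k<m$ has invertible leading coefficient: writing $f=qg$ and comparing top coefficients, the coefficient of $t^{\deg q+k}$ in $qg$ is $\mathrm{lead}(q)\,\sigma^{\deg q}(\mathrm{lead}(g))$, and since $f$ is monic of degree $m$ and $\sigma$ is injective (so $\sigma^{\deg q}(\mathrm{lead}(g))\neq 0$) this forces $\deg q+k=m$ together with invertibility of $\mathrm{lead}(g)$. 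This is the only delicate point, precisely because $S$ is not assumed to be a division ring; here one leans on the degree rule recalled above and the injectivity of $\sigma$. Write $f=hg$ with $\deg h=m-k$.

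The heart of (a)(i) is identifying the left ideal generated by $g$. I claim it equals $V_g=\{pg : p\in R,\ \deg p<m-k\}$, a free left $S$-module with basis $g,tg,\dots,t^{m-k-1}g$ (the leading coefficient of $g$ being invertible makes these $S$-linearly independent of degrees $k,\dots,m-1$). The assertion needing the nonassociative structure is that $V_g$ is closed under left multiplication by every $a\in S_f$: for $v=pg\in V_g$ one has $a\circ v=(ap)g\,\mathrm{mod}_r f$, and writing $(ap)g=Qf+\rho$ with $\deg\rho<m$ and substituting $f=hg$ gives $\rho=(ap-Qh)g$; since $\mathrm{lead}(g)$ is invertible, $\deg\rho<m$ forces $\deg(ap-Qh)<m-k$, so $\rho\in V_g$. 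As $V_g$ contains $g$ and is closed under addition and left $S$-scaling, while any left ideal containing $g$ must contain $t^j\circ g=t^jg$ for $j<m-k$ and their $S$-combinations, $V_g$ is exactly the principal left ideal generated by $g$.

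For the converse, let $I$ be a left ideal and let $g\in I$ be nonzero of minimal degree $k$ with invertible leading coefficient. Right-dividing the monic $f$ by $g$ gives $f=Qg+r$ with $\deg r<k$ and, by the leading-coefficient computation above, $\deg Q=m-k$ with $\mathrm{lead}(Q)=\sigma^{m-k}(\mathrm{lead}(g))^{-1}$. Reducing $t^{m-k}g$ modulo $f$ and substituting $f=Qg+r$ expresses $t^{m-k}\circ g$ as an element of $V_g\subseteq I$ minus $\sigma^{m-k}(\mathrm{lead}(g))\,r$; since both $t^{m-k}\circ g$ and that $V_g$-term lie in $I$, we get $\sigma^{m-k}(\mathrm{lead}(g))\,r\in I$, hence $r\in I$, and minimality of $k$ forces $r=0$, so $g$ is a right divisor of $f$. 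Finally, for arbitrary $h\in I$ right division $h=pg+s$ with $\deg s<k$ gives $pg\in V_g\subseteq I$, hence $s=h-pg\in I$ and $s=0$ by minimality; thus $h=pg\in V_g$ and $I=V_g$ is principal.

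Part (a)(ii) is then bookkeeping: $V_g$ is closed under left $S$-multiplication (for $a\in S$, $a\circ pg=(ap)g$ with no reduction), so passing to coefficient vectors realises it as a free $S$-submodule of $S^m$ of rank $m-\deg g$, i.e. a code of length $m$ and dimension $m-\deg g$. For (b), when $S$ is a division ring every nonzero polynomial automatically has invertible leading coefficient, so any nonzero left ideal contains a nonzero $g$ of minimal degree; scaling by the inverse of its leading coefficient (a unit in $S\subseteq\mathrm{Nuc}_l(S_f)$, so the ideal is unchanged) makes $g$ monic, and (a)(i) applies verbatim to give $I=V_g$ with $g$ a monic right divisor of $f$ (the zero ideal being the case $g=f$). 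I expect the main obstacle to be the closure computation in the second paragraph: because $S_f$ is nonassociative one cannot simply invoke that $V_g$ is an $R$-submodule of $R/Rf$, and it is exactly the substitution $f=hg$ that rescues the argument.
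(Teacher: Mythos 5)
The paper itself gives no proof of this lemma---it is imported by citation from \cite{P15} and \cite{P15.1}---so your argument has to be judged on its own terms. The bulk of it is sound and is the expected division-algorithm argument: the identification of the left ideal generated by $g$ with $V_g=\{pg:\deg p<m-\deg g\}$ via the substitution $f=hg$ in the remainder computation, the converse via a minimal-degree element of the ideal, the rank count in (a)(ii), and the specialisation to division rings in (b) are all correct, \emph{granted} that ${\rm lead}(g)$ is invertible.

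The genuine gap sits exactly at the point you yourself call ``the only delicate point'': the claim that a right divisor $g$ of the monic $f$ with $\deg g<m$ automatically has invertible leading coefficient. Your derivation tacitly assumes that the coefficient ${\rm lead}(q)\,\sigma^{\deg q}({\rm lead}(g))$ of $t^{\deg q+\deg g}$ is nonzero because each factor is nonzero; over a ring $S$ with zero divisors this is a non sequitur, and $\deg q+\deg g$ may strictly exceed $m$. In fact the sub-claim is false in the stated generality: in $(\mathbb{Z}/6\mathbb{Z})[t]$ one has $t^2-1=(4t^2+3t+5)(3t+1)$, so $3t+1$ is a right divisor of degree $1<2$ of a monic polynomial whose leading coefficient $3$ is not a unit (and the left ideal it generates is not free of rank $1$). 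Rings with zero divisors cannot be excluded here, since the lemma is applied to $S=\mathcal{O}_K/\mathfrak{p}\mathcal{O}_K$ precisely in the split and ramified cases. Moreover, even when $\deg q+\deg g=m$, the identity ${\rm lead}(q)\,\sigma^{\deg q}({\rm lead}(g))=1$ only yields left-invertibility of $\sigma^{\deg q}({\rm lead}(g))$, which for merely injective $\sigma$ and noncommutative $S$ does not give two-sided invertibility of ${\rm lead}(g)$. The invertibility of ${\rm lead}(g)$ therefore has to be imposed as a hypothesis on the divisor (as the second sentence of (a)(i) and the cited sources effectively do), or $S$ must be restricted to have no zero divisors; everything downstream of this one step in your proof is fine.
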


\begin{proof}
(a)
 Let $g(t)$ be such a right divisor of $f(t)$, then the ideal $Rf$ is contained in $Rg$ and it is easy to check that
$Rg/Rf=\{h\in R_m\,|\,
h=sg \text{ for some } s\in R_m\}$ is a left ideal in $S_f$.
 \\  The proof of the second assertion is similar to the one of \cite[Lemma 1]{BGU07}: Suppose that $I$ is a left ideal in $S_f$ which contains a non-zero polynomial $g$
of minimal degree with invertible leading coefficient. For any $p\in I\subset R_m$, a right division by
$g$ yields unique $r,q\in R$ with
${\rm deg}(r)<{\rm deg}(g)$ such that
$p=qg+r $
and hence $r=p-qg\in I$. Since we chose $g\in I$ to have minimal degree, we conclude that $r=0$, implying
$p=qg$ and so $I=Rg$ is a principal left ideal, and $g$ is a right divisor of $f$ in $R$.
\\ (b)
Let $I$ be a left ideal of $S_f$. If $I=\{0\}$ then $I=(0)$. So suppose $I\not=(0)$ and choose a monic non-zero
polynomial $g$ in $I\subset R_m$
of minimal degree. As in the proof of (i), for any $p\in I$, a right division by $g$ yields unique
$r,q\in R$ with ${\rm deg}(r)<{\rm deg}(g)$ such that
$p=qg+r $ and hence $r=p-qg\in I$. Since $g\in I$ has minimal degree,  $r=0$, and so $I=Rg$.
\end{proof}

 Let $f,g,h,h'\in S[t;\sigma,\delta] $ be  monic polynomials such that $f=gh=h'g$.
Let $\mathcal{C}$ be the cyclic $(f,\sigma,\delta)$-code corresponding to $g$ and
$c(t)=\sum_{i=0}^{m-1}c_it^i\in S[t;\sigma,\delta]$.
Then  $(c_0,\dots,c_{m-1})\in \mathcal{C}$ is equivalent to
 $c(t)h(t)=0$ in $S_f$ \cite[Theorem 2]{BL13}, i.e. $h$ is a parity check polynomial for $\mathcal{C}$.

The codes $\mathcal{C}$ of length $m$ we consider consist of all elements $(a_0,\dots,a_{m-1})$ obtained  from polynomials
$a(t)=\sum_{i=0}^{m-1}a_it^i$ in a left principal ideal $S_f g$ of $S_f$, with $g$ a monic right divisor of $f$;
 $\sigma$-constacyclic codes are obtained when $f(t)=t^m-c\in S[t;\sigma]$.

For a field $K$, every skew polynomial ring $K[t;\sigma,\delta]$ can be made into either a twisted or a
differential polynomial ring by a linear change of variables  \cite[1.1.21]{J96}. When constructing linear codes, however,
we will consider general skew polynomial rings. They might
  produce better distance bounds than cyclic $(f,\sigma,\delta)$-codes constructed only with an automorphism,
   where $\delta=0$,  see \cite{BU14} for examples of this phenomenon.

%
%

\section{Natural orders in $S_f$ and their quotients by a prime ideal, I} \label{sec:naturalI}

In the following, we use the notation from  \cite[Section 2]{DO}.
Let $K/F$ be a Galois extension of number fields of degree $n$ with
$\mathcal{O}_F$ and $\mathcal{O}_K$  the rings of integers of $F$, respectively  $K$.

\subsection{} \label{subsec:assumptions}

Let $\mathfrak{p}$ be a maximal ideal of $\mathcal{O}_F$, $p$ the prime lying below $\mathfrak{p}$ and
$\mathcal{O}_F/\mathfrak{p}=\mathbb{F}_{p^j}$, where $j$ is the inertial degree of
$\mathfrak{p}$ above $p$. Let $\pi:\mathcal{O}_K\longrightarrow \mathcal{O}_K/\mathfrak{p}\mathcal{O}_K$ be
the canonical projection.
Let  $\sigma\in G={\rm Gal}(K/F)$.
We have $\sigma(\mathfrak{p}\mathcal{O}_K)\subset \mathfrak{p}\mathcal{O}_K$ since
$\sigma|_F=id$. Thus $\sigma$ induces a ring homomorphism
$$\overline{\sigma}:\mathcal{O}_K/\mathfrak{p}\mathcal{O}_K \longrightarrow \mathcal{O}_K/\mathfrak{p}\mathcal{O}_K,
\quad a+\mathfrak{p}\mathcal{O}_K \mapsto \sigma(a)+\mathfrak{p}\mathcal{O}_K $$
with $\pi\circ \sigma|_{\mathcal{O}_K}=\overline{\sigma} \circ \pi$ and ${\rm Fix}(\overline{\sigma})=\mathbb{F}_{p^j}$.
Suppose that $\delta$ is an $F$-linear left $\sigma$-derivation on $K$ such that
$\delta(\mathcal{O}_K)\subset \mathcal{O}_K$.
 Then $\delta$ induces a left $\overline{\sigma}$-derivation
 $\overline{\delta}:\mathcal{O}_K/\mathfrak{p}\mathcal{O}_K \longrightarrow \mathcal{O}_K/\mathfrak{p}\mathcal{O}_K.$
 Since $\mathcal{O}_K$ is a Dedekind domain we have
$$\mathfrak{p}\mathcal{O}_K=\mathfrak{p}_1^{e_1}\mathfrak{p}_2^{e_2}\dots \mathfrak{p}_g^{e_g}$$
for suitable prime (maximal) ideals $\mathfrak{p}_i$ of $\mathcal{O}_K$, $e_i\geq 0$. The ideals $\mathfrak{p}_t^{e_t}$
 are pair-wise comaximal.
By the Chinese Remainder Theorem, we have thus the following direct sum of rings:
$$\mathcal{O}_K/\mathfrak{p}\mathcal{O}_K= \mathcal{O}_K/ \mathfrak{p}_1^{e_1} \cdots\mathfrak{p}_g^{e_g}\mathcal{O}_K
\cong
\mathcal{O}_K/ \mathfrak{p}_1^{e_1}\mathcal{O}_K\times\dots\times \mathcal{O}_K/ \mathfrak{p}_g^{e_g}\mathcal{O}_K.$$
$G$ acts trivially on each of these $ \mathfrak{p}_t$, therefore there is an induced action of $G$ on each
$\mathcal{O}_K/ \mathfrak{p}_t^{e_t}\mathcal{O}_K$ and the above is an isomorphism of $G$-modules (cf. \cite[(9)]{OS}).
That means on each ring $\mathcal{O}_K/ \mathfrak{p}_t^{e_t}\mathcal{O}_K$ there is a canonical induced automorphism $\overline{\sigma}$
and a canonical left $\overline{\sigma}$-derivation $\overline{\delta}$ induced by $\delta$.

In particular, if $\mathfrak{p}$ is inert in $K/F$,
$\mathfrak{p}\mathcal{O}_K$ is a prime ideal
in $\mathcal{O}_K$ and thus $\mathcal{O}_K/\mathfrak{p}\mathcal{O}_K=\mathbb{F}_{p^{nj}}$ a finite field, and
$ \overline{\sigma}\in {\rm Gal}(\mathbb{F}_{p^{nj}}/\mathbb{F}_{p^{j}})$
(cf. \cite[Section 2]{DO} if $\delta=0$ and $K/F$ is cyclic).


\subsection{ }  \label{subsec:naturalI}


Suppose
$f(t)=\sum_{i=0}^{m}d_it^i\in \mathcal{O}_K[t;\sigma,\delta]$
 is monic and irreducible in $K[t;\sigma,\delta]$.
 Consider the nonassociative division algebra
$$S_f=K[t;\sigma,\delta]/K[t;\sigma,\delta] f$$
 over $F$.
Then the nonassociative  $\mathcal{O}_{F}$-algebra
$$\Lambda=\mathcal{O}_K[t;\sigma,\delta]/\mathcal{O}_K[t;\sigma,\delta] f$$
 is an $\mathcal{O}_{F}$-order in $S_f$ called the \emph{natural order} and
 $\Lambda=\mathcal{O}_K\oplus \mathcal{O}_K t\oplus \dots\oplus \mathcal{O}_K t^{m-1}$
as left $\mathcal{O}_K$-module.
Since $f$ is irreducible in $K[t;\sigma,\delta]$, $\Lambda$ does not have any zero divisors.
 The center of $\Lambda$ contains  $\mathcal{O}_{F}$.
Hence for every maximal ideal $\mathfrak{p}$ in $\mathcal{O}_{F}$,
$\mathfrak{p}\Lambda$ is a two-sided ideal of $\Lambda$.

$\Lambda$ is usually not maximal,
 but  it is uniquely determined whenever $Rf$ is not a two-sided ideal, since  in that case $K$ is the left and middle nucleus
 of $S_f$ and uniquely
 determines $\mathcal{O}_{K}$ and in turn $\Lambda$.
 (For examples of classes of maximal orders in nonassociative cyclic algebras of degree two, cf. \cite{Ka, LW}, the results
 there can be generalized to nonassociative algebras of any degree $n$.)

\begin{remark}
  $S_{f}$ is associative  if and only if $Rf$ is a two-sided ideal \cite[Theorem 4 (ii)]{P15}. Therefore our
   definition
 generalizes the non-commutative natural orders  in \cite{DO} which were only defined for two-sided ideals
 $Rf$ and $\delta=0$.
\end{remark}

 For any $g(t)=\sum_{i=0}^{m-1}a_i t^i\in \mathcal{O}_K[t;\sigma,\delta]$ define
   $\overline{g}(t)=\sum_{i=0}^{m-1}\overline{a_i}t^i\in
 (\mathcal{O}_K/\mathfrak{p}\mathcal{O}_K)[t;\overline{\sigma},\overline{\delta}]$ with $\overline{a_i}=
 a_i+\mathfrak{p}\mathcal{O}_K$.
  Let $\overline{f}(t)=\sum_{i=0}^{m}\overline{d_i}t^i\in (\mathcal{O}_K/\mathfrak{p}\mathcal{O}_K)[t;\overline{\sigma},\overline{\delta}]$
  with $\overline{d_i}= d_i+\mathfrak{p}\mathcal{O}_K$.

 \begin{lemma}
 (i)
 The surjective homomorphism of nonassociative rings
$$\Psi:\Lambda\longrightarrow  S_{\overline{f}},\quad g\mapsto \overline{g}$$
has kernel $\mathfrak{p}\Lambda$.
\\ (ii)  $\Psi$
induces an $\mathbb{F}_{p^j}$-algebra isomorphism  given by
$$\Psi:\Lambda/\mathfrak{p}\Lambda\longrightarrow
S_{\overline{f}},\quad g+\mathfrak{p}\Lambda \mapsto \overline{g}.$$
 \end{lemma}

 \begin{proof}
(i)
 $\Lambda$ is nonassociative  $\mathcal{O}_{F}$-algebra and $\Psi$ is a well-defined surjective homomorphism
with kernel $\mathfrak{p}\Lambda$: For all $g=\sum_{i=0}^{m-1}b_it^i\in \mathfrak{p}\Lambda$ it follows that $\overline{g}
=\sum_{i=0}^{m-1}\overline{b_i}t^i=0$  in
 $(\mathcal{O}_K/\mathfrak{p}\mathcal{O}_K)[t;\overline{\sigma},\overline{\delta}]$, so that
 $\mathfrak{p}\Lambda\subset ker(\Psi)$.

 Suppose conversely there is a nonzero $g\in\Lambda$ such that $\Psi(g)=\overline{g}=0$, then
$g=hf+r$ in $\mathcal{O}_K[t;\sigma,\delta]$ with a nonzero $r\in \mathcal{O}_K[t;\sigma,\delta]$, and so
$\overline{g}=\overline{h}\,\overline{f}+\overline{r}$ with $\overline{r}=0$ in
$(\mathcal{O}_K/\mathfrak{p}\mathcal{O}_K)[t;\overline{\sigma},\overline{\delta}]$.
 We have
$\mathfrak{p}\Lambda=\{al\,|\, a\in\mathfrak{p},l\in\Lambda\}=\{\sum_{i=0}^{m-1}a_it^i\,|\, a_i\in
 \mathfrak{p}\mathcal{O}_K\}.$
This implies that $r\in \mathfrak{p}\mathcal{O}_K[t;\sigma,\delta]=\mathfrak{p}\Lambda$.
\\ (ii) follows from (i).
 \end{proof}

\begin{example} \label{ex:nca1}
Let  ${\rm Gal}(K/F)=$ $<\sigma>$ and $f(t)=t^n-d\in \mathcal{O}_K[t;\sigma]$  irreducible in $K[t;\sigma]$.
 $A=(K/F,\sigma,d)$ is a  nonassociative cyclic division algebra of degree $n$ over $F$ and
 $S_{\overline{f}}=\big((\mathcal{O}_K/\mathfrak{p}\mathcal{O}_K)/\mathbb{F}_{p^j},\overline{\sigma},\overline{d}\big)$
with $\overline{f}(t)=t^n-\overline{d}\in (\mathcal{O}_K/\mathfrak{p}\mathcal{O}_K)[t;\overline{\sigma}].$
If $d\in \mathcal{O}_F$ is non-zero, $A$ is associative
and $\Lambda$ depends on the choice of the maximal subfield $K$ in $A$. Then $S_{\overline{f}}$
is an associative (generalized) cyclic algebra as in \cite{DO, OS} and $\overline{f}(t)$ is reducible
whenever $\mathcal{O}_K/\mathfrak{p}\mathcal{O}_K$ is a field.

If $d\in \mathcal{O}_K\setminus \mathcal{O}_F$, $A$ is not associative and the natural order
$$\Lambda=\mathcal{O}_K[t;\sigma]/\mathcal{O}_K[t;\sigma]f=\mathcal{O}_K \oplus  \mathcal{O}_K t\oplus \dots \oplus\mathcal{O}_K  t^{n-1}$$
of $A$ is uniquely determined.
If $n$ is prime then $f$ is irreducible and $A$  a division algebra for every $d\in \mathcal{O}_K\setminus \mathcal{O}_F$.
 If $n$ is not prime and $1,d,\dots,d^{n-1}$ are linearly independent then $f$ is irreducible and
$A$ a division algebra (Remark \ref{re:nonassquats}). Furthermore,
$$\Lambda/\mathfrak{p}\Lambda\cong ((\mathcal{O}_K/\mathfrak{p}\mathcal{O}_K)/\mathbb{F}_{p^j},
\overline{\sigma},\overline{d})=S_{\overline{f}}.$$
\end{example}

\begin{remark}\label{re:I}
Suppose that $K/F$ is cyclic of degree $n$ and inertial with respect to $\mathfrak{p}$, then $\mathcal{O}_K/\mathfrak{p}\mathcal{O}_K=
\mathbb{F}_{p^{jn}}$
 and  ${\rm Gal}(\mathbb{F}_{p^{jn}}/\mathbb{F}_{p^{j}})=\langle\overline{\sigma}\rangle$.
 Let $f(t)=\sum_{i=0}^{n}d_it^i\in \mathcal{O}_K[t;\sigma,\delta]$ be such that ${\overline{f}}(t)=t^n-\overline{ d_0}$.
 \\ (i)
In \cite{DO}, only polynomials $f(t)=t^n-d_0$ with $d_0\in \mathcal{O}_F$
 are considered which makes the ideal
 $(\mathcal{O}_K/\mathfrak{p}\mathcal{O}_K)[t;\overline{\sigma}]\overline{f}$
 two-sided and the resulting $\mathbb{F}_{p^j}$-algebra associative. In this case,
$\overline{f}(t)=t^n-\overline{d_0}$ is always reducible in $\mathbb{F}_{p^{jn}}[t;\overline{\sigma}]$.
\\ (ii) By Lemma \ref{prop:skewcodemain},
if $\overline{f}$ is irreducible, then $S_{\overline{f}}$ has no non-trivial left ideals.
 For instance, if $n$ is prime and $d_0\not\in \mathcal{O}_F$  then
 for all $\overline{d_0}\not=0$,
 $$\Lambda/\mathfrak{p}\Lambda\cong (\mathbb{F}_{p^{jn}}/\mathbb{F}_{p^j},\overline{\sigma},\overline{d_0})$$
 is always  a division algebra, i.e. $f(t)=t^n-\overline{d_0}$ is irreducible, and so there are no non-trivial left ideals
 by Lemma \ref{prop:skewcodemain} (b).
\end{remark}

%
%

\section{Lattice encoding of cyclic $(f,\sigma,\delta)$-codes over $\mathcal{O}_K/\mathfrak{p}\mathcal{O}_K$, I}
\label{sec:codes}

We keep the assumptions and notation from Section \ref{sec:naturalI}.
Let  $\mathcal{I}=\Lambda g(t)$ be a principal left ideal of $\Lambda$   generated by a monic polynomial $g(t)$ such that
$\mathfrak{p}\subset \mathcal{I}\cap\mathcal{O}_F$.
Then $\mathcal{I}/\mathfrak{p}\Lambda$ is a principal left ideal of $\Lambda/\mathfrak{p}\Lambda$ and
$\Psi(\mathcal{I}/\mathfrak{p}\Lambda)$
is a principal left ideal of $((\mathcal{O}_K/\mathfrak{p}\mathcal{O}_K)/\mathbb{F}_{p^j},\overline{\sigma},\overline{c})$
generated by the monic polynomial $\Psi(g+\mathfrak{p}\Lambda)=\overline{ g}$.
That means, $\Psi(\mathcal{I}/\mathfrak{p}\Lambda)$ corresponds to an
$(\overline{f},\overline{\sigma},\overline{\delta})$-code $\mathcal{C}$ over $\mathbb{F}_q$.
In particular, if we choose $f(t)$ such that $\overline{f}(t)=t^m-\overline{ c}$ with $\overline{ c}$ non-zero,
 then $\Psi(\mathcal{I}/\mathfrak{p}\Lambda)$ corresponds to a $\overline{\sigma}$-constacyclic
code over $\mathbb{F}_q$.

If $\overline{f}$ is irreducible and $\mathcal{O}_K/\mathfrak{p}\mathcal{O}_K$ a field,
then $S_{\overline{f}}$   has no nontrivial principal left ideals which contain a non-zero  polynomial
of minimal degree with invertible leading coefficient and so  $\mathcal{C}$ has length $n$ and dimension $n$,
or is zero, whereas when $\overline{f}$ is reducible and $\mathcal{O}_K/\mathfrak{p}\mathcal{O}_K$ a field, an
$(\overline{f},\overline{\sigma},\overline{\delta})$-code $\mathcal{C}$ corresponds to a right divisor $\overline{g}$ of
$\overline{f}$ and has dimension  $n-{\rm deg}(\overline{g})$. So we will look for irreducible $f$ where
$\overline{f}$ is reducible.

\subsection{Construction A} \label{ex:nca}

Let
$$\rho: \Lambda\longrightarrow \Lambda/\mathfrak{p}\Lambda \longrightarrow \Psi(\Lambda/\mathfrak{p}\Lambda)$$ be the
canonical projection $\Lambda\longrightarrow \Lambda/\mathfrak{p}\Lambda$ composed with $\Psi$.
We know that $\mathcal{O}_K$ is a free $\mathbb{Z}$-module of rank $n[F:\mathbb{Q}]$.
Then
$$L=\rho^{-1}(\mathcal{C})=\mathcal{I}$$
is a $\mathbb{Z}$-module of dimension $N=nm[F:\mathbb{Q}]$.
The embedding of this lattice into $\mathbb{R}^N$ is canonically determined by
 considering $A\otimes_{\mathbb{Q}}\mathbb{R}$. Now all works exactly as
 as explained in \cite[Section 3.3]{DO}.
The construction of $L$ can  be seen as a non-commutative variation of the classical Construction A in \cite{CS}.

This way we can construct a lattice $L$ in $\mathbb{R}^N$ from the linear code $\mathcal{C}$ over the finite ring
$S=\mathcal{O}_K/\mathfrak{p}\mathcal{O}_K$.
The non-commutative variation of Construction A in \cite{DO} is the special
case that $f(t)=t^n-c\in \mathcal{O}_F[t]\subset K[t;\sigma]$, where $S_f$ is associative.

\begin{example}
Let $K[t;\sigma]=\mathbb{Q}(i)[t;\sigma]$ with $\sigma$ the complex conjugation, so that
$F=\mathbb{Q}$, $\mathcal{O}_F=\mathbb{Z}$ and $\mathcal{O}_K=\mathbb{Z}[i]$.
Let $f(t)=t^2-t+(i-3)\in\mathbb{Z}[i][t,\sigma]$, then $f(t) $ is irreducible in $\mathbb{Q}(i)[t;\sigma]$, since
  $\sigma(z)z-z\not=i-3$ for all $z\in \mathbb{Q}(i)$ \cite[(17)]{P66}.
  Let $p=3$. Then $\mathbb{Z}[i]/3\mathbb{Z}[i]=\mathbb{F}_{9}$ and using
 the natural order $\Lambda=\mathbb{Z}[i]\oplus \mathbb{Z}[i]t$ in $S_f$, we obtain the nonassociative algebra
 $\Lambda/3\Lambda\cong S_{\overline{f}}$  over $\mathbb{F}_{3}$ with
  $\overline{f}(t)=t^2-(\alpha+\alpha^3-1)t+\alpha-3\in \mathbb{F}_{9}[t;\overline{\sigma}]$
 where
 $\overline{\sigma}(\alpha)=\alpha^3$, if $\alpha$ is a primitive root of $\mathbb{F}_{9}$ over $\mathbb{F}_{3}$, that is
$\alpha^2+1=0$. Since
$$\overline{f}(t)=(t-2+\alpha)(t+1+\alpha),$$
$\overline{f}$ is reducible in $\mathbb{F}_{9}[t;\overline{\sigma}]$. The left ideal generated by
$(t+1+\alpha)$ in $S_{\overline{f}}$ yields a cyclic
 $(f,\sigma,0)$-code of length 2 and dimension one. Taking the pre-image of it under $\Psi$ it corresponds to a
 principal left ideal $\mathcal{I}/3\Lambda$ in $\Lambda/3\Lambda$.
\end{example}

\subsection{Examples involving nonassociative quaternion algebras}

Let $K=\mathbb{Q}(i)$, $F=\mathbb{Q}$, so that $\mathcal{O}_F=\mathbb{Z}$ and $\mathcal{O}_K=\mathbb{Z}[i]$.
The examples given in \cite{DO} are special cases of our construction using
 cyclic algebras. We now consider some algebras which are not associative.

 Let $f(t)=t^2-bt-c\in\mathbb{Z}[i][t,\sigma]$ be irreducible in $\mathbb{Q}(i)[t;\sigma]$.
This is equivalent to $\sigma(z)z-bz-c\not=0$ for all $z\in \mathbb{Q}(i)$ \cite[(17)]{P66}. In particular,
 if $b,c\in \mathbb{Z}$ then $f(t)$ is irreducible if $b^2+4c<0$ (alternatively, if $f$ is an irreducible
 polynomial in $\mathbb{R}$) by \cite[Corollary 2.6]{BZ}.
 Suppose that $\overline{f}(t)=t^2-\overline{c}\in (\mathbb{Z}[i]/\mathfrak{p}\mathbb{Z}[i])[t;\overline{\sigma}]$
 for some maximal ideal $\mathfrak{p}$ in $\mathcal{O}_F$.

 For the natural order $\Lambda=\mathbb{Z}[i]\oplus \mathbb{Z}[i]t$, we obtain the (perhaps nonassociative) quaternion algebra
 $$\Lambda/\mathfrak{p}\Lambda\cong ((\mathbb{Z}[i]/\mathfrak{p}\mathbb{Z}[i])/\mathbb{F}_{p^j},
 \overline{\sigma},\overline{c})=S_{\overline{f}}.$$
In particular,
$\Lambda/\mathfrak{p}\Lambda=(\mathbb{Z}[i]/\mathfrak{p}\mathbb{Z}[i])\oplus (\mathcal{O}_K/\mathfrak{p}\mathcal{O}_K)t$
as $(\mathbb{Z}[i]/\mathfrak{p}\mathbb{Z}[i])$-module.

For any choice of $c\in \mathbb{Z}$ such that $c\not\in\mathfrak{p}\mathbb{Z}[i]$, $\overline{f}(t)=t^2-\overline{c}\in\mathcal{O}_K[t,\sigma]$ is reducible.

For $b=0$ and any choice of $c\in \mathbb{Z}[i]\setminus\mathbb{Z}$, $f(t)=t^2-c\in \mathbb{Z}[i][t,\sigma]$ is irreducible
 in $\mathbb{Q}(i)[t;\sigma]$ and therefore
$$A=S_f=(\mathbb{Q}(i)/\mathbb{Q},\sigma,c)$$
 a nonassociative quaternion division algebra (for instance,  $ct=(t^2)t\not=t(t^2)=\sigma(c)t$ in $A$.) We can also write $A$ as the Cayley-Dickson doubling
${\rm Cay}(\mathbb{Q}(i),c)$, defined in the obvious way.

\begin{example}  \label{ex:1}
Let $f(t)=t^2-c\in \mathbb{Z}[i][t;\sigma]$, $c\in \mathbb{Z}[i]\setminus\mathbb{Z}$.
 Choose any $p$ which remains inert in $\mathbb{Q}(i)$, then
$\mathcal{O}_K/\mathfrak{p}\mathcal{O}_K=\mathbb{F}_{p^{2j}}$,  where $j$ is the inertial degree of
$\mathfrak{p}$ above $p$, and
$$\Lambda/\mathfrak{p}\Lambda\cong (\mathbb{F}_{p^{2j}}/\mathbb{F}_{p^j},\overline{\sigma},\overline{c}).$$
If $\overline{c}\not=0$ this is a division algebra because $f(t)=t^2-\overline{c}$ is irreducible. Given
any principal left ideal $\mathcal{I}$ of $\Lambda$
 containing $p$, $\Psi(\mathcal{I}/\mathfrak{p}\Lambda)$ is thus either trivial or all of
$S_{\overline{f}}=\mathbb{F}_{p^{2j}}[t;\overline{\sigma}]/\mathbb{F}_{p^{2j}}[t;\overline{\sigma}]\overline{f}$.

If $\overline{c}=0$, i.e. when $c\in \mathfrak{p}\mathcal{O}_K$
then $\overline{f}(t)=t^2$  and $S_{\overline{f}}$ is a commutative associative algebra.
 There are no
$\overline{\sigma}$-constacyclic codes since here $\overline{c}=0$. Thus this algebra
cannot be used for lattice encoding of $\overline{\sigma}$-constacyclic codes.

E.g., take $p=3$. If  $c=i$ then
$\Lambda/3\Lambda\cong (\mathbb{F}_{9}/\mathbb{F}_{3},\overline{\sigma},\overline{i})$
is a nonassociative quaternion division algebra over $\mathbb{F}_{3}$ where
 $\overline{\sigma}(\alpha)=\alpha^3$, if $\alpha$ is a primitive root of $\mathbb{F}_{9}$ over $\mathbb{F}_{3}$, that is
$\alpha^2+1=0$.
 $\mathcal{I}=(1+i)\Lambda$ satisfies $3\in \mathcal{I}\cap\mathcal{O}_F$
 (since $1-2i\in \mathbb{Z}[i]$, so $(1+i)(1-2i)=3\in \mathcal{I}$). Hence
$\mathcal{I}/3\Lambda$ is a  left principal ideal of
$\Lambda/3\Lambda\cong (\mathbb{F}_{9}/\mathbb{F}_{3},\overline{\sigma},\overline{i})$, generated by $\Psi((1+i)+3\Lambda)$,
 implying
$\mathcal{I}/3\Lambda \cong (\mathbb{F}_{9}/\mathbb{F}_{3},\overline{\sigma},\overline{i})$. Since here
$\overline{f}$ is irreducible, the only available (and trivial) $\overline{\sigma}$-constacyclic code here is the one
corresponding to the algebra $(\mathbb{F}_{9}/\mathbb{F}_{3},\overline{\sigma},\overline{i})$.

If $c=3i$, then $\overline{f}(t)=t^2$ and
$\Lambda/3\Lambda\cong (\mathbb{F}_{9}/\mathbb{F}_{3},\overline{\sigma},0)$
is a commutative associative algebra over $\mathbb{F}_{3}$. There are no
$\overline{\sigma}$-constacyclic codes since  $\overline{c}=0$, hence this example
cannot be used for lattice encoding of $\overline{\sigma}$-constacyclic codes.
\end{example}

\begin{example} \label{ex:2}
Let $f(t)=t^2-c\in \mathbb{Z}[i][t;\sigma]$, $c\in \mathbb{Z}[i]\setminus\mathbb{Z}$.
\\ (i)  Choose any $p$ which splits in $\mathbb{Q}(i)$, e.g. $p=5$. Then $(5)=(1+2i)(1-2i)$ means that
$\mathbb{Z}[i]/5 \mathbb{Z}[i]\cong \mathbb{Z}[i]/(1-2i) \times \mathbb{Z}[i]/(1+2i)\cong \mathbb{F}_5\times\mathbb{F}_5$
and
 $$\Lambda/5\Lambda=(\mathbb{F}_5\times\mathbb{F}_5)\oplus (\mathbb{F}_5\times\mathbb{F}_5) e$$
is a  nonassociative quaternion algebra over $\mathbb{F}_{5}$ with
$$\Lambda/5\Lambda\cong ((\mathbb{F}_{5}\times \mathbb{F}_{5})/\mathbb{F}_{5},\overline{\sigma},\overline{c})
=S_{\overline{f}} \text{ with }
\overline{f}(t)=t^2-\overline{c}\in (\mathbb{F}_{5}\times \mathbb{F}_{5})[t;\overline{\sigma}].$$
Here, $\overline{\sigma}(a,b)=(b,a)$ fixes the elements $(a,a)$, $a\in\mathbb{F}_{5}$.
The algebra $\Lambda/5\Lambda$ is a split nonassociative quaternion algebra \cite{W}, however for all
$\overline{c}\not=0$,
 $\overline{f}$ is irreducible in $(\mathbb{F}_{5}\times \mathbb{F}_{5})[t;\overline{\sigma}]$,
since $\overline{c}\not\in \mathbb{F}_{5} $. In this case, there are no non-trivial divisors of $\overline{f}$ and hence no non-trivial
codes to lift. If $\overline{ c}=0$ then $\overline{f}(t)=t^2$  and
$\Lambda/5\Lambda$ a commutative associative algebra.   There are no
$\overline{\sigma}$-constacyclic codes since  $\overline{c}=0$.
\\ (ii) Choose  $p=2$ which ramifies in $\mathbb{Q}(i)$. Then
$\mathbb{Z}[i]/2 \mathbb{Z}[i]\cong \mathbb{F}_2+\mathbb{F}_2 v=\{0,1,v,v+1\}$ with $v^2=0$. I.e.,
$\mathbb{F}_2+\mathbb{F}_2 v$ is the finite chain ring of characteristic $2$, nilpotency index $2$ and
residue field $\mathbb{F}_{2}$. Here $\overline{\sigma}=id$,
 $$\Lambda/2\Lambda=(\mathbb{F}_2+\mathbb{F}_2 v)\oplus (\mathbb{F}_2+\mathbb{F}_2 v) e$$
 and we have the following $\mathbb{F}_{2}$-algebra isomorphism:
$$\Lambda/2\Lambda\cong ((\mathbb{F}_2+\mathbb{F}_2 v)/\mathbb{F}_{2},id,\overline{c})=S_{\overline{f}}$$
with $\overline{f}(t)=t^2-\overline{c}\in (\mathbb{F}_2+\mathbb{F}_2 v)[t]$,
  $c\in \mathbb{Z}[i]\setminus \mathbb{Z}$.
For both  $\overline{c}=v$ and $\overline{c}=v+1$,
it is easy to show that
$\overline{f}$ is irreducible, and if $\overline{c}=0$ again $\overline{f}=t^2.$
We conclude that $p=2$ does not yield an algebra which can be employed for lattice encoding.
\end{example}

\subsection{}
 For
 a nonassocative cyclic algebra $A=(K/F,\sigma,c)$ of prime degree $n$, $A$ is a division algebra if and only if
$c\in K\setminus F$. Examples \ref{ex:1} and \ref{ex:2} demonstrate that this poses a problem when trying to find
irreducible $f(t)=t^n-c$ such that $\overline{f}(t)=t^n-\overline{ c}$  is reducible and $0\not=\overline{ c}$,
 since  $\overline{f}(t)$ is either
irreducible, or $\overline{ c}=0$.  This is not the case when $n$ is not prime:

\begin{example} \label{ex:4}
Let $f(t)=t^4-c$.
Let $\omega_{15}$ be a primitive 15th root of unity, $K=\mathbb{Q}(i,\omega_{15}+\omega_{15}^{-1})$, and
$F=\mathbb{Q}(i)$. Choose $c\in \mathcal{O}_K\setminus \mathcal{O}_F$
such that $1,c,c^2,c^3$ are $F$-linearly independent. Then
$D=(\mathbb{Q}(i,\omega_{15}+\omega_{15}^{-1})/\mathbb{Q}(i),\sigma,c)$ is a nonassociative cyclic division algebra
of degree 4 and
$$\Lambda=\mathbb{Z}[i,\omega_{15}+\omega_{15}^{-1}]\oplus \mathbb{Z}[i,\omega_{15}+\omega_{15}^{-1}] t\oplus
\mathbb{Z}[i,\omega_{15}+\omega_{15}^{-1}]t^2\oplus \mathbb{Z}[i,\omega_{15}+\omega_{15}^{-1}]t^3$$
is the natural order in $D$.

Let $\mathfrak{p}=( 1+i)$. Then $\mathfrak{p}$ is unramified in
$\mathbb{Q}(i,\omega_{15}+\omega_{15}^{-1})$,  $\mathbb{Z}[i]/( 1+i)\cong\mathbb{F}_2$, and
$$\Lambda/\mathfrak{p}\Lambda\cong(\mathbb{F}_{16}/\mathbb{F}_2,\overline{\sigma},\overline{c})$$
 is a nonassociative cyclic
 algebra of degree $4$ over $\mathbb{F}_2$ which for $\overline{ c}\not=0$ is never a division algebra, since $1,c,c^2,c^3$ are always
 linearly dependent over $\mathbb{F}_2$. Hence $f(t)=t^4-\overline{c}$ is reducible.
 If $\overline{c}=1$ then given any principal left ideal $\mathcal{I}$ of $\Lambda$
 containing $(1+i)$ that is generated by a monic polynomial, $\Psi(\mathcal{I}/(1+i)\Lambda)$ corresponds to a
$\overline{\sigma}$-constacyclic code  over $\mathbb{F}_{16}$.
\end{example}

%
%

\section{Natural orders in $S_f$ and their quotients by a prime ideal, II} \label{sec:naturalII}

\subsection{}\label{subsec:iteratedI}

Let $K/F$ be a cyclic Galois extension of number fields of degree $n$ and let
 $D=(K/F, \rho, c)$ be a cyclic division algebra over $F$ such that $c\in \mathcal{O}_F^\times$.
 Let $\mathcal{D}=(\mathcal{O}_K/\mathcal{O}_F,\rho, c)$
be the generalized associative cyclic algebra over $\mathcal{O}_F$ of degree $n$ such that
 $\mathcal{D} \otimes_{\mathcal{O}_F}F=(K/F, \rho, c)=D$.
 Then
$\mathcal{D}=\mathcal{O}_K \oplus   \mathcal{O}_Ke \oplus \dots  \oplus\mathcal{O}_K  e^{n-1}$
is a natural $\mathcal{O}_F$-order of $D$, cf. \ref{subsec:naturalI} or \cite{DO}.

Let $\sigma\in {\rm Aut}(D)$ and $\delta$ be a $\sigma$-derivation on $D$, satisfying the following
criteria:
\begin{itemize}
\item $F_0=F\cap {\rm Fix}(\sigma)\cap {\rm Const}(\delta)$ is a number field.
\item $\sigma(\mathcal{D})\subset \mathcal{D}$ and $\delta(\mathcal{D})\subset \mathcal{D}$.
\item $S_0=\mathcal{O}_F\cap {\rm Fix}(\sigma)\cap {\rm Const}(\delta)$ is  the ring of integers of $F_0$ where here
$\sigma$ and $\delta$ denote the restrictions of $\sigma$ and $\delta$ to $\mathcal{D}$.
\end{itemize}

Suppose
$f(t)=\sum_{i=0}^{m}d_it^i\in \mathcal{D}[t;\sigma,\delta]$
 is monic and irreducible in
$D[t;\sigma,\delta]$. Consider the division algebra
$$S_f=D[t;\sigma,\delta]/D[t;\sigma,\delta] f$$
 over $F_0$. Then the $S_0$-algebra
$$\Lambda=\mathcal{D}[t;\sigma,\delta]/\mathcal{D}[t;\sigma,\delta] f$$
 is an $S_0$-order in $S_f$ which we call the \emph{natural order} (this is again usually not maximal). $\Lambda$ is not uniquely determined even when $Rf$ is not a two-sided ideal. It
depends on the choice of the maximal subfield in $D$ which we will assume to be $K$.

 Since $f$ is irreducible in $D[t;\sigma,\delta]$, $\Lambda$ does not have  zero divisors.
If $1,e,\dots,e^{n-1}$
is the canonical basis of $D$ then
$$\Lambda=\mathcal{O}_K \oplus  \mathcal{O}_K e\oplus \dots  \oplus \mathcal{O}_K e^{n-1}
\oplus \mathcal{O}_K t\oplus \mathcal{O}_K et\oplus\dots
\oplus \mathcal{O}_K e^{n-1}t \oplus \dots\oplus\mathcal{O}_K  e^{n-1}t^{m-1}$$
as left $\mathcal{O}_K$-module.

Let $\mathfrak{p}$ be a prime ideal in $S_0$ such that $\mathfrak{p}\mathcal{O}_F$ is maximal.  Since $S_0$
lies in the centers of both $\mathcal{D}$ and  $\Lambda$, $ \mathfrak{p}\mathcal{D}$ is a two-sided ideal of $\mathcal{D}$
and $\mathfrak{p}\Lambda$ is a two-sided ideal of $\Lambda$.
Let $\pi:\mathcal{D}\longrightarrow \mathcal{D}/\mathfrak{p}\mathcal{D}$ be
the canonical projection.
We have $\sigma( \mathfrak{p}\mathcal{D})\subset \mathfrak{p}\mathcal{D}$ since
 $\mathfrak{p}\subset{\rm Fix}(\sigma)$ and $\sigma(\mathcal{D})\subset \mathcal{D}$ by  assumption.
 Thus $\sigma$ induces a ring homomorphism
$$\overline{\sigma}:\mathcal{D}/\mathfrak{p}\mathcal{D} \longrightarrow \mathcal{D}/\mathfrak{p}\mathcal{D}$$
with
${\rm Fix}(\overline{\sigma})={\rm Fix}(\sigma)/\mathfrak{p}{\rm Fix}(\sigma)$
and $\pi\circ \sigma=\overline{\sigma} \circ \pi$.
We also have
$\delta( \mathfrak{p}\mathcal{D})\subset  \mathfrak{p}\mathcal{D}$ by  assumption, so that $\delta$ induces a left $\overline{\sigma}$-derivation
 $\overline{\delta}:\mathcal{D}/\mathfrak{p}\mathcal{D} \longrightarrow \mathcal{D}/\mathfrak{p}\mathcal{D}$
 with field of constants
 ${\rm Const}(\overline{\delta})={\rm Const}(\delta)/\mathfrak{p}.$
Let $$\overline{S_0}={\rm Fix}(\overline{\sigma})\cap {\rm Const}(\overline{\delta})\cap\overline{F} $$
with $\overline{F}=\mathcal{O}_F/\mathfrak{p}\mathcal{O}_F=\mathbb{F}_{p^j}$, where $j$ is the inertial degree of
$\mathfrak{p}$ above $p$.
 For any $g(t)=\sum_{i=0}^{m-1}a_i t^i\in \mathcal{D}[t;\sigma,\delta]$ define
   $\overline{g}(t)=\sum_{i=0}^{m-1}\overline{a_i}t^i\in
 (\mathcal{D}/\mathfrak{p}\mathcal{D})[t;\overline{\sigma},\overline{\delta}]$ with $\overline{a_i}=
 a_i+\mathfrak{p}\mathcal{D}$.  Let $\overline{f}(t)=\sum_{i=0}^{m}\overline{d_i}t^i\in (\mathcal{D}/\mathfrak{p}\mathcal{D})[t;\overline{\sigma},\overline{\delta}]$
  with $\overline{d_i}= d_i+\mathfrak{p}\mathcal{D}$, then
  $$S_{\overline{f}}=(\mathcal{D}/\mathfrak{p}\mathcal{D})[t;\overline{\sigma},\overline{\delta}]/(\mathcal{D}/\mathfrak{p}\mathcal{D})
[t;\overline{\sigma},\overline{\delta}]\overline{f}.$$
Since
$\overline{S_0}=S_0/\mathfrak{p} \cong {\rm Fix}(\overline{\sigma})\cap {\rm Const}(\overline{\delta})\cap\overline{F}$,
 $S_{\overline{f}}$ is an algebra over a subfield of $\mathbb{F}_{p^j}$.

 \begin{lemma}
 (i)
  The surjective homomorphism of additive groups
$$\Psi:\Lambda\longrightarrow  (\mathcal{D}/\mathfrak{p}\mathcal{D})[t;\overline{\sigma},\overline{\delta}]/
(\mathcal{D}/\mathfrak{p}\mathcal{D})[t;\overline{\sigma},\overline{\delta}]\overline{f},\quad g\mapsto \overline{g}$$
has kernel $\mathfrak{p}\Lambda$.
\\ (ii) $\Psi$ induces an $\overline{S_0}$-algebra isomorphism
$$\Psi:\Lambda/\mathfrak{p}\Lambda\cong
(\mathcal{D}/\mathfrak{p}\mathcal{D})[t;\overline{\sigma},\overline{\delta}]/(\mathcal{D}/\mathfrak{p}\mathcal{D})
[t;\overline{\sigma},\overline{\delta}]\overline{f},\quad g+\mathfrak{p}\Lambda \mapsto \overline{g}$$
 \end{lemma}

 \begin{proof}
 (i) For all $g=\sum_{i=0}^{m-1}b_it^i\in \mathfrak{p}\Lambda$ we have $\overline{g}=\sum_{i=0}^{m-1}\overline{b_i}t^i=0$ in
 $(\mathcal{D}/\mathfrak{p}\mathcal{D})[t;\overline{\sigma},\overline{\delta}]$, so that
 $\mathfrak{p}\Lambda\subset ker(\Psi)$.

 Suppose conversely there is a nonzero $g\in\Lambda$ such that $\Psi(g)=\overline{g}=0$, then
$g=hf+r$ in $\mathcal{D}[t;\sigma,\delta]$ with a nonzero $r\in \mathcal{D}[t;\sigma,\delta]$, and so
$\overline{g}=\overline{h}\,\overline{f}+\overline{r}$ with $\overline{r}=0$ in
$(\mathcal{D}/\mathfrak{p}\mathcal{D})[t;\overline{\sigma},\overline{\delta}]$.
 We have
$\mathfrak{p}\Lambda=\{al\,|\, a\in\mathfrak{p},l\in\Lambda\}=$
$\{\sum_{i=0}^{m-1}a_it^i\,|\, a_i\in \mathfrak{p}\mathcal{D}\}.$
This implies that $r\in \mathfrak{p}\mathcal{D}[t;\sigma,\delta]=\mathfrak{p}\Lambda$.
\\ (ii) follows from (i).
 \end{proof}

\subsection{A special case} \label{subsec:iterated}
Let $S_f=(D,\sigma,d)$ be the $F_0$-algebra constructed in Example  \ref{ex:gencyclic} where now
$F$, $L$ and $K$ be number fields. Suppose that $c \in \mathcal{O}_{F_0}$ and that $d\in \mathcal{O}_L^\times$ or
$d\in \mathcal{O}_F^\times$.
Then $\mathcal{D}=(\mathcal{O}_K/\mathcal{O}_F, \rho, c)$
is an associative cyclic algebra over $\mathcal{O}_F$ of degree $n$ such that
 $\mathcal{D} \otimes_{\mathcal{O}_F}F=(K/F, \rho, c)=D$ is a division algebra over $F$.
 For $x= x_0 + x_1 e+x_2 e^2 +\dots + x_{n-1}e^{n-1}\in D$, define
$$\sigma(x)=\sigma(x_0) +  \sigma(x_1)e + \sigma(x_2)e^2 +\dots + \sigma(x_{n-1})e^{n-1}.$$
Since $c \in \mathcal{O}_{F_0}$, $\sigma\in {\rm Aut}_L(D)$ has order $m$
and restricts to $\sigma\in {\rm Aut}_{\mathcal{O}_{L}}(\mathcal{D})$.

 Let $\mathfrak{p}$ be a prime ideal in $\mathcal{O}_{F_0}$ such that $\mathfrak{p}\mathcal{O}_F$ is maximal. Then
$$\Lambda/\mathfrak{p}\Lambda\cong
(\mathcal{D}/\mathfrak{p}\mathcal{D})[t;\overline{\sigma}]/(\mathcal{D}/\mathfrak{p}\mathcal{D})
[t;\overline{\sigma}]\overline{f}\cong (\overline{D}, \overline{\sigma}, \overline{d})$$
is an algebra over
 $\overline{F_0}=\mathcal{O}_{F_0}/\mathfrak{p}$,
with
$\overline{D}=\mathcal{D}/\mathfrak{p}\mathcal{D}$
 a generalized associative cyclic algebra over $\mathbb{F}_{p^j}={\rm Fix}(\overline{\rho})$.

\begin{example} \label{ex:5.2}
Let $\omega$ denote the primitive third root of unity, $\omega_7$  a primitive $7$th root of
unity  and $\theta = \omega_7 + \omega_7^{-1} = 2
\cos(\frac{2 \pi}{7})$. Put $F = \mathbb{Q}(\theta)$.
Let $K = F(\omega) =\mathbb{Q}(\omega, \theta)$ and consider
 the quaternion division algebra $D = (K/F, \sigma, -1)$. Note that
 $\sigma(\omega) = \omega^2$. Let $L =\mathbb{Q}(\omega)$, so that $K/L$ is a cubic cyclic
field extension
whose Galois group is generated by the automorphism $\tau: \omega_7 +\omega_7^{-1} \mapsto \omega_7^2 + \omega_7^{-2}$.
Note that $\omega\in\mathcal{O}_L=\mathbb{Z}[\omega]$.

The multiplication of the division algebra $A = (D, \tau, \omega)$
is behind the fully diverse
codes employed in
 \cite{R13} (cf. \cite{SP14}). Here,
$$\Lambda=\mathbb{Z}[\omega_3,\omega_7+\omega_7^{-1}]\oplus \mathbb{Z}[\omega_3,\omega_7+\omega_7^{-1}] e\oplus \mathbb{Z}[
\omega_3,\omega_7+\omega_7^{-1}]e^2\oplus\dots $$
is a natural order in $A$.

Let $p=2$, then $\mathfrak{p}$ is a prime ideal in $\mathcal{O}_F=\mathbb{Z}[\omega_3]$ which
 remains prime in $\mathcal{O}_K=\mathbb{Z}[\omega_3,\omega_7+\omega_7^{-1}]$ and
$\mathbb{Z}[i]/\mathfrak{p}\cong\mathbb{F}_4$.
 $\mathfrak{p}$ is inert in $K=\mathbb{Q}(\omega_3,\omega_7+\omega_7^{-1})$.
 Now
  $$\mathcal{D}/\mathfrak{p}\mathcal{D}=(\mathbb{F}_{64}/\mathbb{F}_8,\overline{\sigma},-1)\cong {\rm Mat}_2(\mathbb{F}_8)$$
 is a split quaternion algebra over $\mathbb{F}_8$.
Thus
$$\Lambda/\mathfrak{p}\Lambda\cong ({\rm Mat}_2(\mathbb{F}_8),\overline{\tau},\overline{\omega})$$
 where $\overline{\omega}\in \mathbb{Z}[\omega]/2\mathbb{Z}[\omega]$.
\end{example}

%
%

\section{Lattice encoding of cyclic $(f,\sigma,\delta)$-codes over $\mathcal{O}_K/\mathfrak{p}\mathcal{O}_K$, II}
\label{sec:codesII}

We continue to assume the setup from Section \ref{subsec:iteratedI}.

\subsection{A second generalization of Construction A}\label{sub:ex:nca}

Let $\mathcal{I}$ be a  principal left ideal of $\Lambda$ generated by a monic polynomial $g(t)$, such that
$\mathfrak{p}\subset \mathcal{I}\cap S_0$.
Then $\mathcal{I}/\mathfrak{p}\Lambda$ is a non-zero principal left ideal of $\Lambda/\mathfrak{p}\Lambda$ and
$\Psi(\mathcal{I}/\mathfrak{p}\Lambda)$
is a principal left ideal of $S_{\overline{f}}$ generated by the monic polynomial $\Psi(g+\mathfrak{p}\Lambda)=\overline{g}$.
That means $\Psi(\mathcal{I}/\mathfrak{p}\Lambda)$ corresponds to an
$(\overline{f},\overline{\sigma},\overline{\delta})$-code $\mathcal{C}$ over $\mathcal{O}_K/\mathfrak{p}\mathcal{O}_K$.

In particular, if we choose $\delta=0$ and $f(t)$ such that
$\overline{f}(t)=t^m-\overline{ c}\in \mathcal{D}/\mathfrak{p}\mathcal{D}$ with $\overline{c}$ non-zero, then $\Psi(\mathcal{I}/\mathfrak{p}\Lambda)$ is a
$\overline{\sigma}$-constacyclic code over $\mathcal{O}_K/\mathfrak{p}\mathcal{O}_K$.

If $\overline{f}$ is irreducible and $\mathcal{D}/\mathfrak{p}\mathcal{D}$ is a division algebra, then the algebra
$S_{\overline{f}}$ is simple. Then any non-zero code
$\mathcal{C}$ must have length $m$ and dimension $m$ (and correspond to the whole algebra),
whereas whenever $\overline{f}$ is reducible, $\mathcal{C}$ respectively $\Psi(\mathcal{I}/\mathfrak{p}\Lambda)$
 corresponds to a right divisor $\overline{g}$ of
$\overline{f}$ and has dimension  $m-{\rm deg}(\overline{g})$.
Let
$$\rho: \Lambda\longrightarrow \Lambda/\mathfrak{p}\Lambda \longrightarrow \Psi(\Lambda/\mathfrak{p}\Lambda)$$ be the
canonical projection $\Lambda\longrightarrow \Lambda/\mathfrak{p}\Lambda$ composed with $\Psi$.
We know that $\mathcal{O}_K$ is a free $\mathbb{Z}$-module of rank $n[F:\mathbb{Q}]$.
Therefore
$$L=\rho^{-1}(\mathcal{C})=\mathcal{I}$$
is a $\mathbb{Z}$-module of dimension $N=n^2m[F:\mathbb{Q}]$.
The embedding of this lattice into $\mathbb{R}^N$ is canonically determined by
 considering $S_f\otimes_{\mathbb{Q}}\mathbb{R}$. Again all works exactly as
 explained in \cite[Section 3.3]{DO} (since associativity is not relevant for the argument).
The construction of $L$ can again be seen as a second (nonassociative) variation of the
non-commutative Construction A in \cite{CS}.

In this way we can construct a lattice $L$ in $\mathbb{R}^N$ from the linear
$(\overline{f},\overline{\sigma},\overline{\delta})$-code $\mathcal{C}$ over a finite ring.

Note that even if $R=D[t;\sigma,\delta]$ is isomorphic to $D[t;\sigma]$ or $D[t;\delta]$, like when $\sigma$ or $\delta$
are inner, we conjecture that the codes/lattices we obtain from
using different ways to write $R$ can be substantially different in performance, similarly as the examples obtained in
\cite{BU14}, where some of the codes obtained by working with the general skew polynomial
ring $\mathbb{F}_q[t;\sigma,\delta]$  have a better distance bound than the ones obtained with
$\delta=0$.

\subsection{Space-time block codes}

We now apply the above considerations to space-time block coding (cf. \ref{subsec:STBC}).

\begin{example}
 Let $A=(K/F,\sigma,c)$, $c\in\mathcal{O}_K$ non-zero, be a nonassociative cyclic division algebra  over $F$
 of degree $m$ with $c\in \mathcal{O}_K$.
  Take the natural order $\Lambda=\mathcal{O}_K[t;\sigma]/\mathcal{O}_K[t;\sigma]f$, and
let $a = a_0 + a_1t + \cdots +  a_{m-1}t^{m-1},\,b = b_0 + b_1t + \cdots +  b_{m-1}t^{m-1}\in \Lambda$.
If we identify  $a $  with the vector $(a_0, a_1, \ldots, a_{m-1})$,
 we can express  $R_a\in {\rm End}_{\mathcal{O}_K}(\Lambda)$  as
 an $m \times m$-matrix $M(a)$    with entries in $\mathcal{O}_K$:
\begin{equation} \label{equ:matrix_rep_cda}
M(a)= \left[ \begin{array}{ccccc}
a_0 & c \sigma(a_{m-1})& c \sigma^2(a_{m-2}) & \cdots & c \sigma^{m-1}(a_1) \\
a_1 & \sigma(a_0) & c\sigma^2(a_{m-1}) & \cdots & c \sigma^{m-1}(a_{2}) \\
a_2 & \sigma(a_1) & \sigma^2(a_0) & \cdots & c \sigma^{m-1}(a_3)\\
\vdots & \vdots & \vdots & \ddots & \vdots \\
a_{m-1} & \sigma(a_{m-2}) & \sigma^2(a_{m-3}) & \cdots & \sigma^{m-1}(a_0) \end{array} \right]^T .
\end{equation}
If we identify $b$ with $(b_0, b_1, \ldots, b_{m-1})$, the right multiplication with $a$ in $\Lambda$ is given by
 the matrix multiplication $b\cdot a=bM(a) $.
The family of matrices $\{M(a)\,|\, 0\not=a\in A\}$ is a fully diverse  linear space-time
 block code $\mathcal{C}$.

 Let $\mathfrak{p}\subset \mathcal{O}_{F}$ be a maximal ideal. Then
$\rho: \Lambda\longrightarrow \Lambda/\mathfrak{p}\Lambda \longrightarrow \Psi(\Lambda/\mathfrak{p}\Lambda)=
 ((\mathcal{O}_K/\mathfrak{p}\mathcal{O}_K)/\mathbb{F}_{p^f},
\overline{\sigma},\overline{c}).$
Hence
$L=\rho^{-1}(\mathcal{C})=\mathcal{I}$
is a fully diverse space-time block code over $\mathcal{O}_K$ which
is a $\mathbb{Z}$-lattice whose  embedding into $\mathbb{R}^{m}$ is canonically determined by
 $A\otimes_{\mathbb{Q}}\mathbb{R}$.
\end{example}

Nonassociative cyclic division algebras as above can be employed to obtain fully diverse multiple-input double-output
codes \cite{SPO12}.
 The algebras $A=(D,\sigma^{-1},d)$ we consider next are used for the systematic space-time block code constructions
of the fast-decodable iterated codes in  \cite{MO13}, \cite{PS15},  \cite{P13.2},  \cite{R13}.

\begin{example}  Let $A=(D,\sigma,d)$ be a division algebra of degree $n$
and $d\in \mathcal{O}_L$ or $d\in \mathcal{O}_F$.

For $x=x_0+x_1t+x_2t^2+\cdots+x_{m-1}t^{m-1}$, $y=y_0+y_1t+y_2t^2+\cdots+y_{m-1}t^{m-1}\in \Lambda$,
$x_{i}, y_i \in \mathcal{D}$, represent $x$ as $(x_0, x_1, \ldots, x_{m-1})$ and $y$ as
$(y_0, y_1, \ldots, y_{m-1})$.
  Then $R_x\in {\rm End}_\mathcal{D}(\Lambda)$ is given by
the $m \times m$-matrix
\[M(x) = \left[ \begin{array}{ccccc}
x_0 & d \sigma(x_{m-1})& d \sigma^{2}(x_{m-2}) & \cdots & d \sigma^{m-1}(x_1) \\
x_1 & \sigma(x_0) & d \sigma^{2}(x_{m-1}) & \cdots & d \sigma^{m-1}(x_{2}) \\
x_2 & \sigma(x_1) & \sigma^{2}(x_0) & \cdots & d \sigma^{m-1}(x_3)\\
\vdots & \vdots & \vdots & \ddots & \vdots \\
x_{m-1} & \sigma(x_{m-2}) & \sigma^{2}(x_{m-3}) & \cdots & \sigma^{m-1}(x_0) \end{array} \right]^T \]
 with entries in $\mathcal{D}$.
We can write the multiplication in $\Lambda$  as $y\cdot x= yM(x)$.
Now substitute the right regular representation $\gamma(d)$ in $\mathcal{D}$ for $d$ in $M(x)$  and
 the right regular representation $\gamma(x_i)$  in $\mathcal{D}$ for each
entry $x_i$ in $M(x)$. This way we obtain a block matrix
\[\gamma(M(x)) := \left[ \begin{array}{cccc}
\gamma(x_0) & \gamma(d) \sigma(\gamma(x_{m-1}))&  \cdots & \gamma(d) \sigma^{m-1}(\gamma(x_1)) \\
\gamma(x_1) & \sigma(\gamma(x_0)) & \cdots & \gamma(d) \sigma^{m-1}(\gamma(x_{2})) \\
                                       \vdots & \vdots  & \ddots & \vdots \\
\gamma(x_{m-1}) & \sigma(\gamma(x_{m-2})) & \cdots & \sigma^{m-1}(\gamma(x_0)) \end{array} \right]^T
\]
where $\sigma(\gamma(x_i))$ means we apply $\sigma$ to each entry of the $n\times$-matrix $\gamma(x_i)$.
Products are the usual matrix products. This is an $mn \times mn$ matrix with entries in $\mathcal{O}_K$.
 It represents right multiplication in $\Lambda$.
 Writing elements in
$\Lambda=\mathcal{O}_K \oplus  \mathcal{O}_K e\oplus  \dots\oplus\mathcal{O}_K  e^{n-1}t^{m-1}$
as row vectors of length $mn$ with entries in $\mathcal{O}_K$, we obtain
$y\cdot x = y\, \gamma(M(x))$.

The family of matrices $\{ \gamma(M(x))\}$ is a fully diverse  linear space-time
 block code $\mathcal{C}$. In particular,
if $d\in \mathcal{O}_F$, then
$\det(\gamma(M(x))) \in \mathcal{O}_F$ (\cite{MO13}, \cite[Remark 5]{PS15})
and if $d\in \mathcal{O}_L$, then
\[ \gamma(M(x)) = \begin{bmatrix}
             \gamma(x_0) & d \sigma(\gamma(x_{n-1}))& d \sigma^{2}(\gamma(x_{n-2})) & \cdots & d \sigma^{m-1}(\gamma(x_1)) \\
             \gamma(x_1) & \sigma(\gamma(x_0)) & d \sigma^{2}(\gamma(x_{n-1})) & \cdots & d \sigma^{m-1}(\gamma(x_{2})) \\
             \gamma(x_2) & \sigma(\gamma(x_1)) & \sigma^{2}(\gamma(x_0)) & \cdots & d \sigma^{m-1}(\gamma(x_3))\\
             \vdots & \vdots & \vdots & \ddots & \vdots \\
             \gamma(x_{n-1}) & \sigma(\gamma(x_{n-2})) & \sigma^{2}(\gamma(x_{n-3})) & \cdots & \sigma^{m-1}(\gamma(x_0))
             \end{bmatrix}^T
             \]
and
$\det(\gamma(M(x))) \in L\cap \mathcal{O}_K=\mathcal{O}_L$ (\cite{R13}, \cite[Lemma 19]{PS15}).

 Let $\mathfrak{p}\subset \mathcal{O}_{F_0}$ be a prime ideal such that $\mathfrak{p}\mathcal{O}_F$ is maximal. Then
$$\rho: \Lambda\longrightarrow \Lambda/\mathfrak{p}\Lambda \longrightarrow \Psi(\Lambda/\mathfrak{p}\Lambda)
=(\overline{D}, \overline{\sigma}, \overline{d}).$$
Here,
$L=\rho^{-1}(\mathcal{C})=\mathcal{I}$
induces a fully diverse STBC over $\mathcal{O}_K$ which
is a $\mathbb{Z}$-lattice whose  embedding into $\mathbb{R}^N$, $N=mn^2$, is canonically determined by
 $A\otimes_{\mathbb{Q}}\mathbb{R}$.
\end{example}

\begin{remark}
The explanations in \cite[Section 5.2, 5.3]{DO} hold analogously for our generalization of Construction A
in Section \ref{ex:nca} and the examples here, and
 show the potential of the construction for coset coding used in space-time block coding, in particular for wiretap
 space-time block coding, but also for linear codes over finite rings.
Moreover, the matrix generating a cyclic $(f,\sigma,\delta)$-code $\mathcal{C}\subset S^m$
 represents the right multiplication $R_g$ in $S_f$ and is a control matrix of $\mathcal{C}$  \cite{P15}.
\end{remark}

%
%

\section{Conclusion}

We presented a method how to construct a lattice from a suitable $(f,\sigma,\delta)$-code defined over a finite ring
which can be seen as a generalization of the classical construction A.
This can be summarized as follows:
Let $D$ be a cyclic division algebra over $F$ which is already defined over $\mathcal{O}_F$, or a Galois field extension and
 $f$ defined over its ring of integers.
Take the additional assumptions on $\sigma$ and $\delta$ as given in the corresponding previous sections.
\begin{itemize}
\item Choose some monic skew polynomial $f\in \mathcal{D}[t;\sigma,\delta]$
(resp., $f\in \mathcal{O}_K[t;\sigma,\delta]$ in the field case) which is irreducible in $D[t;\sigma,\delta]$.
\item Take a natural order $\Lambda$ of $S_f$.
\item Choose a prime ideal $\mathfrak{p}$ in $S_0$. This yields the finite ring
$\mathcal{O}_K/\mathfrak{p}\mathcal{O}_K$
  you consider the code $\mathcal{C}$  to be defined over.
 $\overline{f}$ must be reducible in $(\mathcal{D}/\mathfrak{p}\mathcal{D})[t;\overline{\sigma},\overline{\delta}]$.
\item Choose a principal left ideal $\mathcal{I}$  of $\Lambda$ generated by a monic polynomial, such that
$\mathfrak{p}\subset \mathcal{I}\cap S_0$.
\item
 $\Psi(\mathcal{I}/\mathfrak{p}\Lambda)$ corresponds to an
$(\overline{f},\overline{\sigma},\overline{\delta})$-code $\mathcal{C}$ over $\mathcal{O}_K/\mathfrak{p}\mathcal{O}_K$, and
$L=\rho^{-1}(\mathcal{C})=\mathcal{I}$
is a $\mathbb{Z}$-lattice whose embedding into $\mathbb{R}^N$ is canonically determined by
 $S_f\otimes_{\mathbb{Q}}\mathbb{R}$.
\end{itemize}

If we want to apply this construction to space time block coding instead, we substitute the last step with:
\begin{itemize}
\item Take the matrix representing right multiplication in $\Lambda$ and let $\mathcal{C}$ be the
associated space-time block code. Then $\rho^{-1}(\mathcal{C})=\mathcal{I}$ is a fully diverse space-time block code
which is a $\mathbb{Z}$-lattice.
\end{itemize}

If desired, this method can be extended to work for any Noetherian integral domain and central simple algebra  over its
quotient field.
It can be applied for coset coding and wiretap coding analogously as described in \cite[Sections 5.2, 5.3]{DO}.

It would be interesting to investigate which properties of $\mathcal{C}$ carry over to the lattice STBC $L$
and find examples of well performing coset codes.



\begin{thebibliography}{1}

\bibitem{Am} A. S. Amitsur, \emph{Differential polynomials and division algebras.}
 Annals of Mathematics, Vol. 59  (2) (1954) 245-278.

\bibitem{Am2} A. S. Amitsur, \emph{Non-commutative cyclic fields.}
Duke Math. J. 21 (1954), 87–105.

\bibitem{M0} J. Apel, \emph{Gr\"obnerbasen in Nichtkommutativen Algebren und ihre Anwendung}.
Dissertation, Leipzig (1988)


\bibitem{A} M. Artin, \emph{Noncommutative rings.}
http://math.mit.edu/~etingof/artinnotes.pdf, last accessed 24.4.2017


\bibitem{BZ} J. Bergen, M. Giesbrecht,  P. N. Shivakumar, Y. Zhang,
\emph{Factorizations for difference operators}. Adv. Difference Equ. 2015, 2015:57, 6 pp.


\bibitem{B} M. Bhaintwal,  \emph{Skew quasi-cyclic codes over Galois rings}. Des. Codes Cryptogr. 62 (1) (2012), 85-101.

\bibitem{BG}  A. Batoul, K. Guenda, T. A. Gulliver,  \emph{On self-dual cyclic codes over finite chain rings}.
  Des. Codes Cryptogr. 70 (3) (2014),  347-358.



\bibitem{BSU08} D. Boucher, P. Sol\`e, F. Ulmer, \emph{Skew-constacyclic codes over Galois rings}.
 Adv. Math. Comm. 2 (3) (2008), 273-292.

\bibitem{BU14} D. Boucher, F. Ulmer, \emph{Linear codes using skew polynomials with automorphisms and derivations},
 Des. Codes Cryptogr. 70 (3) (2014), 405-431.

\bibitem{BU14.2} D. Boucher, F. Ulmer,
\emph{Self-dual skew codes and factorization of skew polynomials}, J. Symbolic Comput. 60 (2014), 47-61.

\bibitem{BU09.2} D. Boucher, F. Ulmer, \emph{Coding with skew polynomial rings},
J. Symbolic Comput. 44 (12) (2009),  1644-1656.

\bibitem{BU09} D. Boucher, F. Ulmer,
\emph{Codes as modules over skew polynomial rings. Cryptography and coding},
Lecture Notes in Comput. Sci., 5921, Springer, Berlin, 2009, 38-55.

\bibitem{BGU07} D. Boucher, W. Geiselmann, F. Ulmer,  \emph{Skew-cyclic codes.} AAECC 18 (2007), 370-389.


\bibitem{BL13} M. Boulagouaz, A. Leroy, \emph{$(\sigma,\delta)$-codes.}  Adv. Math. Comm. 7 (4) (2013), 463-474.

 \bibitem{BGV}
J. Bueso, J. Gomez-Torrecillas, and A. Verschoren. ``Methods in Non-Commutative Algebra.'' (2003). Kluwer
Academic Press.

 \bibitem{Cao}   Y. Cao, \emph{On constacyclic codes over finite chain rings.} Finite Fields Appl. 24 (2013), 124-135



 \bibitem{CM} M. Ceria, T. Mora, \emph{Buchberger–Zacharias Theory of multivariate Ore extensions.}
  Journal of Pure and Applied Algebra
221 (12)  (2017), 2974-3026.

\bibitem{C} P. M. Cohn,
``Skew fields.'' Theory of general division rings.
Encyclopedia of Mathematics and its Applications, 57. Cambridge University Press, Cambridge, 1995.


\bibitem{CS} J. H. Conway, N. J. Sloane, \emph{Sphere packings, Lattices and groups.} Springer Verlag 1999.

\bibitem{DO.0} J.~Ducoat, F.~Oggier, \emph{Lattice encoding of cyclic codes from skew polynomial rings}.
Proc. of the 4th International Castle Meeting on Coding Theory and Applications, Palmela, 2014.


\bibitem{DO} J.~Ducoat, F.~Oggier, \emph{On skew polynomial codes and lattices from quotients of
 cyclic division algebras.} Adv. Math. Comm. 10 (1) 2016, 79-94.

\bibitem{FG} N. Fogarty, H. Gluesing-Luerssen, \emph{A Circulant Approach to Skew-Constacyclic Codes.}
 Finite Fields Appl. 35 (2015), 92–114.

\bibitem{G1} J.~G\'{o}mez-Torrecillas, F. J.~Lobillo, G.~Navarro, \emph{A new perspective of cyclicity in convolutional codes.}
 IEEE Trans. Inform. Theory 62 (5) (2016),  2702-2706.

\bibitem{G2} J.~G\'{o}mez-Torrecillas, F. J.~Lobillo, G.~Navarro, \emph{Convolutional codes with a matrix-algebra word-ambient.}
 Adv. Math. Commun. 10 (1) (2016),  29-43.

\bibitem{G3} J.~G\'{o}mez-Torrecillas, F. J.~Lobillo, G.~Navarro, \emph{An isomorphism test for modules over a non-commutative
 PID. Applications to similarity of Ore polynomials.} J. Symbolic Comput. 75 (2016), 149-170.

\bibitem{G4} J.~G\'{o}mez-Torrecillas, F. J.~Lobillo, G.~Navarro,
\emph{Separable automorphisms on matrix algebras over finite field extensions: applications to ideal codes.}
 ISSAC'15 - Proceedings of the 2015 ACM International Symposium on Symbolic and Algebraic Computation,
 189–195, ACM, New York, 2015.

\bibitem{G5} J.~G\'{o}mez-Torrecillas, F. J.~Lobillo, G.~Navarro, \emph{Information-bit error rate and false positives
in an MDS code.} Adv. Math. Commun. 9 (2) (2015), 149-168.

 \bibitem{G6} J.~G\'{o}mez-Torrecillas, \emph{ Basic module theory over non-commutative rings with computational aspects
  of operator algebras. With an appendix by V. Levandovskyy.} Lecture Notes in Comput. Sci. 8372, Algebraic and
  algorithmic aspects of differential and integral operators, 23-82, Springer, Heidelberg, 2014.


 \bibitem{GK}   J.  Gao, Kong, \emph{Qiong 1-generator quasi-cyclic codes over
    $\mathbb{F}_{p^m}+u\mathbb{F}_{p^m}+\dots+u^{s-1}\mathbb{F}_{p^m}$.} J. Franklin Inst. 350 (10) (2013), 3260-3276.



\bibitem{Hoe} Hoechsmann, Klaus, \emph{Simple algebras and derivations.} Trans. Amer. Math. Soc. 108 (1963), 1-12.


\bibitem{J96} N.~Jacobson,
``Finite-dimensional division algebras over fields,'' Springer Verlag,
Berlin-Heidelberg-New York, 1996.



 \bibitem{Ka} J. S. Kauta,  \emph{Maximal orders and valuation rings in nonassociative quaternion algebras.}
 Proceedings of the 39th Symposium on Ring Theory and Representation Theory, 65-74, Symp. Ring Theory Represent.
 Theory Organ. Comm., Yamaguchi, 2007.

 \bibitem{KrW} A. Kandri-Rody, W. Weispfenning, \emph{Non-commutative Gr\"obner bases in algebras of solvable type}. J. Symb. Comp. 9 (1990), 1-26


  \bibitem{Kr} H. Kredel, \emph{Solvable polynomial rings.}
Shaker Verlag 1993.

\bibitem{LW} H. J. Lee, W. C. Waterhouse,   \emph{Maximal orders in nonassociative quaternion algebras.}
 J. Algebra 146 (2) (1992), 441-453.

 \bibitem{Lev} V. Levandovskyy, \emph{Non-commutative computer algebra for polynomial algebras:
Gr\"obner bases, applications and implementation.} Dissertation,
 Kaiserslautern  (2005),
{\tt http://kluedo.ub.uni-kl.de/volltexte/2005/1883/}


    \bibitem{Lev2} V. Levandovskyy, \emph{PBW bases, non-degeneracy conditions and applications.}
  In: Buchweitz, R.-O., Lenzing, H. (Eds.),
Representation of Algebras and Related Topics. Proceedings of
 the ICRA X Conference,
 45 (2005), 229-246.





\bibitem{LL}  X. Liu, H. Liu, \emph{LCD codes over finite chain rings}. Finite Fields Appl. 34 (2015), 1-19.




\bibitem{MO13} N.~Markin, F.~Oggier, \emph{Iterated space-time code constructions from cyclic algebras.}
  IEEE Transactions on Information Theory,  59 (9), September 2013.

\bibitem{N1} G.~Nebe, A.~Schaefer, \emph{A nilpotent non abelian group code.}
Algebra and Discrete Mathematics 18 (2014) 268-273.

\bibitem{N2} G.~Nebe, W.~Willems, \emph{On self-dual MRD codes.}
Adv.  Math. of Comm. 10 (2016) 633-642.


 \bibitem{N3} G. Nebe, D. Liebhold and A. Vazquez Castro,
\emph{Network coding with flags.} To appear in Designs, Codes and Cryptography.




\bibitem{OS} F.~Oggier, B.~A.~Sethuraman, \emph{Quotients of orders in cyclic algebras and space-time codes}.
 Adv. Math. Commun. 7 (4) (2013),  441-461.

 \bibitem{OB} F.~Oggier, J.-C. Belfiore, \emph{Enabling multiplication in lattice codes via Construction A}.
IEEE Int. Workshop Inf. Theory 2013, 1-5.


\bibitem{O} O. Ore, \emph{Formale Theorie der linearen Differentialgleichungen. (Zweiter Teil)}. (German)
 J. Reine Angew. Math. 168 (1932), 233-252.

 \bibitem{O1} O. Ore, \emph{Theory of noncommutative polynomials.} Annals of Math. (1933), 480-508.

  \bibitem{P1} M. Pesch, \emph{ Gr\"obner Bases in skew polynomial rings.}
Shaker Verlag 1998.


     \bibitem{P2} M. Pesch, \emph{Two-sided Gr\"obner bases in iterated Ore extensions.}
 Progress in Computer Science and Applied Logic (15) (1991), 225-243, Birkh\"auser.


\bibitem{P66} J.-C. Petit, \emph{Sur certains quasi-corps g\'{e}n\'{e}ralisant un type d'anneau-quotient},
S\'{e}minaire Dubriel. Alg\`{e}bre et th\'{e}orie des nombres 20 (1966-67), 1-18.

\bibitem{P68} J.-C. Petit, \emph{Sur les quasi-corps distributifes \`{a} base momog\`{e}ne},
C. R. Acad. Sc. Paris  266 (1968), S\'{e}rie A, 402-404.


\bibitem{SP14} S. Pumpl\"un, A. Steele,
\emph{Fast-decodable MIDO codes from nonassociative algebras.}
 Int. J. of Information and Coding Theory (IJICOT) 3 (1) 2015, 15-38.


\bibitem{PS15} S.~Pumpl\"un, A. Steele, \emph{The nonassociative algebras used to build fast-decodable space-time block
codes}.  Adv.  Math.  Comm. 9 (4) (2015), 449-469.

 \bibitem{P13.2}  S.~Pumpl\"un, \emph{How to obtain division algebras used for fast decodable space-time block codes}.
  Adv.  Math.  Comm. 8 (3) (2014), 323 - 342.


\bibitem{P15} S.~Pumpl\"un, \emph{Finite nonassociative algebras obtained from skew polynomials
and possible applications to $(f,\sigma,\delta)$-codes}.
 Adv.  Math.  Comm.
11 (3) (2017), 615-634. doi:10.3934/amc.2017046



\bibitem{S} R. Sandler,  \emph{Autotopism groups of some finite non-associative algebras}.
 American Journal of Mathematics 84 (1962), 239-264.


\bibitem{Sch} R.D. Schafer, ``An introduction to nonassociative algebras,'' Dover Publ., Inc., New York, 1995.


\bibitem{R13} K. P.~Srinath, B. S.~Rajan, \emph{Fast-decodable MIDO codes with large coding gain.}
{\em IEEE Transactions on Information Theory} (2) 60  2014, 992-1007.

\bibitem{SPO12} A.~Steele, S.~Pumpl\"un, F.~Oggier,
\emph{MIDO space-time codes from associative and non-associative cyclic algebras.}
 Information Theory Workshop (ITW) 2012 IEEE (2012), 192-196.


\bibitem{S12} A.~Steele, \emph{Nonassociative cyclic algebras.}
 Israel J. Math. 200 (1) (2014),  361-387.


\bibitem{W} W. C. Waterhouse, {\it Nonassociative quaternion algebras}. Algebras, Groups
and Geometries 4 (1987), 365-378.


 \bibitem{Wei} V. Weispfenning, \emph{Finite Gr\"obner bases in non-noetherian skew polynomial rings.}
 Proc. ISSAC'92 (1992), 320-332, A.C.M.


\bibitem{MW}   M. Wu,  \emph{Free cyclic codes as invariant submodules over  finite chain rings}.
Int. Math. Forum 8 (37-40) (2013), 1835-1838.

\end{thebibliography}
\end{document}